\documentclass[preprint,12pt]{elsarticle}
\usepackage[T1]{fontenc}
\usepackage{lmodern}
\input{glyphtounicode}
\usepackage{microtype}
\usepackage{amsmath,amssymb}
\usepackage{amsthm}
\usepackage{booktabs}
\usepackage{multirow}
\usepackage{hyperref}
\DeclareUrlCommand\doi{\urlstyle{rm}}  
\usepackage{xurl}
\graphicspath{{figures/}}

\newtheorem{theorem}{Theorem}
\newtheorem{proposition}{Proposition}
\newtheorem{corollary}{Corollary}
\newtheorem{definition}{Definition}

\makeatletter\@ifundefined{bibsep}{\newlength{\bibsep}}{}\makeatother
\makeatletter
\def\@listi{\leftmargin\leftmargini \topsep 3\p@ \parsep 1\p@ \itemsep 2\p@}
\let\@listI\@listi \@listi
\makeatother

\journal{Engineering Applications of Artificial Intelligence}

\begin{document}

\begin{frontmatter}

\title{Seed-Anchored Budget-Bounded Graph Rendering for Question Answering on Industry-Standard Power-Grid Information and Exchange Models}

\author[epri]{Jayakumar Manoharan\corref{cor1}\fnref{orcid1}}
\ead{jmanoharan@epri.com}

\author[epri]{Yamini Sehgal\fnref{orcid2}}
\ead{ysehgal@epri.com}

\affiliation[epri]{organization={Electric Power Research Institute (EPRI)},
            addressline={1300 W. T. Harris Blvd},
            city={Charlotte},
            state={NC},
            postcode={28262},
            country={USA}}

\cortext[cor1]{Corresponding author.}
\fntext[orcid1]{ORCID: 0009-0009-7765-9165}
\fntext[orcid2]{ORCID: 0009-0004-2364-2310}

\begin{abstract}
Large language model question answering over power-grid models must respect a
fixed context budget. The artificial-intelligence contribution is a render-boundary
locality prior with a precondition checkable before rendering: a deterministic,
seed-anchored rendering algorithm with no method-specific tuned or learned
parameters beyond the shared hop bound and context budget,
preserving every predefined seed-local answer-bearing render unit when
aggregate render mass $D(S)$ stays within budget $B$, holding for every
question evaluated. The mechanism is truncation, measured without a reader:
on both budget-binding networks naive rendering keeps evidence for every single-hop
item but only $0.12$ and $0.00$ of multi-hop items; seed-anchored keeps
$1.00$ on both strata. A join against reader correctness shows this
failure is truncation, not reader error. The engineering application is question
answering over Common Information Model (CIM) models under European Network of
Transmission System Operators for Electricity (ENTSO-E) Common Grid Model Exchange Standard
(CGMES) conformity validation. On a fresh SmallGrid-family bank, the pre-registered
primary endpoint confirms: accuracy rises from $0.450$ to
$0.970$ ($n{=}100$, one-sided McNemar $p{=}2.2\times10^{-14}$), item-level
confirmation within one topology family, not network-level generalization. On an independent
$72{,}418$-object topology the attribute-lookup stratum shows no harm, all arms at $1.000$;
its multi-hop stratum is retired (non-unique gold referents). Under a common pipeline,
the seed-anchored standards-native arm matched or exceeded three GraphRAG frameworks' extracted
graphs at zero construction tokens, across MicroGrid, SmallGrid and SmallGrid-3.0.
The endpoint is specific to one 7B reader at an 8{,}000-character budget; the evidence
concerns budget-bounded retrieval, not general question answering.
\end{abstract}

\begin{keyword}
retrieval \sep graph \sep budget \sep question answering \sep power grid \sep CIM
\end{keyword}


\end{frontmatter}

\section{Introduction}
\label{sec:intro}

Developing scalable retrieval methods for large-language-model (LLM) assistants is a central
challenge in artificial intelligence for engineering. When the knowledge source consists of detailed
power-system representations encoded in the Common Information Model (CIM)~\citep{iec61970} and
exchanged via the Common Grid Model Exchange Standard (CGMES)~\citep{entsoe2014cgmes}, the retrieval
system must select a subgraph that fits within the finite token context of an LLM while preserving
all answer-relevant information. This paper treats that problem as a budget-bounded graph-rendering
task and evaluates it on European Network of Transmission System Operators for Electricity
(ENTSO-E) CGMES conformity-assessment configurations and an independently sourced larger grid
topology; results do not establish performance on deployed operational models. The engineering
motivation is grounded lookup over a utility's own grid model rather than open-domain question
answering, with intended applications such as equipment-attribute retrieval for planning and
protection review, topology and containment lookup during model validation and ENTSO-E conformity
assessment, operator exploration of an unfamiliar export, and comparison across model revisions.
If a retrieval failure occurs under a fixed context budget, it could undermine the reliability of
automated grid audits, planning tools, and operational decision support, so an assistant's cost,
data-governance, and fidelity matter as much as its raw accuracy. Four evaluation tiers are
distinguished throughout: the six ENTSO-E CGMES conformity-test configurations; the RealGrid
topology; synthetic scaling networks used for answer-presence evaluation only; and operational
transmission systems, which are not evaluated here.

The default recipe for such assistants is graph-based retrieval-augmented generation
(GraphRAG), which specialises retrieval-augmented generation~\citep{lewis2020_rag} to a graph:
build a knowledge graph over the content, retrieve a query-relevant subgraph, and let an LLM reader
answer from the rendered context~\citep{edge2024_from_local_global, guo2024_lightrag_simple_fast}.
Almost every recent power-domain GraphRAG system builds that graph by prompting an LLM to extract
entities and relations from text~\citep{2026_hgrag_hierarchical_graphenhanced}, unlike earlier
hand-engineered power knowledge graphs~\citep{tang2021_intelligent_question_answering}. That choice
is expensive and lossy: LLM extraction spans 0.03M to 1.7M tokens per network across the three
frameworks we test (31{,}890 to 206{,}254 for HippoRAG; 87{,}962 to 754{,}781 for LightRAG; 235{,}312 to 1{,}703{,}707 for Microsoft GraphRAG), recurs on every model revision, and recovers only part of the source relational
structure~\citep{chen2025_are_large_language, mihindukulasooriya2023text2kgbench}, so numeric
attributes migrate between similarly named components. Yet power systems already ship a curated,
schema-typed relational model in the CGMES export, so we ask whether this standards-native graph
can be read directly under a fixed context budget rather than reconstructed by an LLM. A prior
controlled study~\citep{prior2026_trustworthy_kg_cgmes} reported the puzzle we diagnose here:
parsed directly from CGMES, the standards-native graph was strictly more faithful yet lost badly on
the larger network under a fixed 8{,}000-character budget.

\textbf{The failure is retrieval, not representation.} We cast budget-bounded retrieval as an
ordering over render units (entity descriptions and edge renderings) truncated at a budget $B$.
The common naive order, which renders every entity description before any edge, must fail whenever
a high-degree hub (a shared voltage level, region container, or busbar; degree 381 to 481 in our
data) floods the budget before the answer-bearing edges are reached (Theorem~\ref{thm:separation}).
The failure is representation-dependent and flips with the encoding: the raw CIM graph collapses on
the bus-branch SmallGrid yet survives on its node-breaker CGMES 3.0 counterpart, where the derived
support graph collapses instead (Section~\ref{sec:results-flip}).

\textbf{A budget-safe fix, and where its novelty lies.} We introduce seed-anchored retrieval, a
deterministic render order that introduces no additional tuned or learned parameters and that enforces the locality prior at the render/truncation
boundary through a checkable greedy-prefix precondition ($D(S)\le B$). The guarantee covers
the predefined seed-local answer-bearing render units whenever the seed-local render mass fits the
budget; we do not claim that hub-induced overflow is ruled out in general. Concretely, it
renders the query object's own description and its incident edges first, then fills outward by hop
distance. The retriever consumes zero LLM graph-construction tokens, introduces no learned parameters
and no tuned degree threshold, and uses the same hop bound $k$ and budget $B$ as the naive baseline.
Its methodological contribution is \emph{where} and \emph{how} the locality prior is enforced, at the
render/truncation boundary with a provable a priori boundary rather than at the retrieval-scoring
stage. In the evidence-scoped answer-presence audit, naive rendering places the evidence-linked
gold content in context for $0.56$ of questions on SmallGrid and $0.50$ on SmallGrid-3.0, and
seed-anchored rendering for $1.00$ on both; where the budget does not bind, both reach $1.00$.
Ego-graph expansion and hub-pruned BFS, rerun under the same test, also reach $1.00$ on both
budget-binding networks, so the contribution is not presence superiority over locality priors but
the checkable a priori guarantee and the fact that the method introduces no additional method-specific tuned or learned parameters beyond the shared hop bound and context budget. Here,
\emph{budget-binding} is an operational label with a measured criterion: naive serialization must
both reach the 8{,}000-character cap and actually evict evidence, the latter established by the
evidence-scoped presence audit falling below $1.00$. Cap hits alone do not qualify. MicroGrid
illustrates why the second condition is needed: naive hits the cap on 48 of 50 items there, yet its
evidence-scoped presence remains $1.00$, so the cap is reached without evidence being lost and
MicroGrid is not budget-binding. SmallGrid ($0.56$) and SmallGrid-3.0 ($0.50$) fall below $1.00$
and are budget-binding. The label does not mean that $D(S)>B$. Nor is the problem answered by ranking the budget fill by relevance: a
score-weighted fill, standard in token-budgeted context selection~\citep{peng2025_adagres_budgeted},
does not protect the answer, because a high-degree hub can itself be highly relevant to the query and
still flood the window and evict the seed-incident answer edges, whereas ordering the fill by hop
locality renders those edges first regardless of their relevance score. What seed-anchored adds is a
checkable greedy-prefix guarantee at that boundary: whenever the seed-local render mass fits the
budget, every seed-local answer is preserved (an elementary answer-preservation property, not a deep
theorem, Theorem~\ref{thm:preservation}) and the order is never worse than naive rendering in answer
presence (no-harm dominance, Proposition~\ref{prop:noharm}), the assurance a deployer needs before
swapping a renderer and the one our failed threshold heuristic lacked. It is not a new proximity
score but a deployment-safe rendering rule over the standards-native CIM/CGMES graph with no LLM
extraction. That failed first attempt, a degree-threshold heuristic tuned on the development network
that regressed on a new one, we report as validation history (Supplementary Material, Section~S1).

\textbf{Evaluation with a development and held-out split.} We evaluate on six ENTSO-E CGMES
conformity networks spanning 233 to 5{,}315 objects and two CGMES versions (2.4.15 and 3.0); the
CGMES 3.0 networks are the same underlying grids re-encoded in the newer version and are held out from
retrieval-method, seeding, hop-bound, threshold, and scorer development, with parser compatibility
fixes for nullable 3.0 fields disclosed separately, so the evaluation tests robustness to the encoding-version change. Across twelve arms (nine graph
arms, being two standards-native representations under naive rendering, the seed-anchored method,
and three LLM-extracted graphs from three frameworks read under both render orders; plus three
corpus-level baselines), five question strata, deterministic scoring backed by two LLM
judges, and pre-registered statistics, three findings emerge. Because the templated benchmark is
self-referential (83\% of gold answers appear verbatim), the strongest non-self-referential evidence
is the 100-question human-phrased stress test (Section~\ref{sec:results-human}, method $0.83$).

\begin{enumerate}
\item \textbf{Seed-anchored retrieval realizes its guarantee, and the mechanism it acts on is
truncation.} Measured without a reader and split by stratum, naive rendering retains the
evidence-linked gold content for every single-hop item on both budget-binding networks, but for
only $0.12$ of multi-hop items on SmallGrid and none on SmallGrid-3.0, while seed-anchored
rendering retains all of them on both strata. An item-level join against reader correctness shows
the multi-hop failure is truncation rather than reader error: where the evidence survived, the
reader used it. This mechanism, not any single accuracy delta, is the paper's load-bearing result.
Seed-anchored retrieval was not observed to underperform
naive on any network. After Holm correction, accuracy improves significantly on three networks:
MicroGrid ($0.68$ to $0.92$, $p{=}7.3\times10^{-3}$) and the two budget-binding networks,
SmallGrid ($0.52$ to $0.98$, $p{=}1.2\times10^{-6}$) and SmallGrid-3.0 ($0.42$ to $0.96$,
$p{=}8.9\times10^{-8}$). The two budget-binding networks clear the conservative minimum detectable
effect of $0.280$; MicroGrid is Holm-significant but falls below it, and is not budget-binding.
The other three per-network comparisons are underpowered
and are treated as inconclusive, resting on a retrieval-level no-harm property under the stated
precondition rather than on a statistical non-inferiority claim about end-to-end accuracy. On the
larger independent RealGrid topology, whose multi-hop stratum is retired for want of unique gold
referents, the surviving attribute-lookup stratum shows all three deterministic arms at $1.000$, a
no-harm result rather than a dominance one (Section~\ref{sec:results-main}). The pre-registered
primary confirmatory endpoint is a pooled fresh-bank test drawn on the two SmallGrid encodings
(Section~\ref{sec:results-main}); it is strong item-level confirmation of the mechanism within that
one topology family and is not a network-level generalization claim. The human-authored probes and
RealGrid are reported separately as robustness and transfer evidence and are not folded into it.
\item \textbf{Across three GraphRAG frameworks (LightRAG, Microsoft GraphRAG, and HippoRAG), under
the tested extractor, reader, and budget, standards-native seed-anchored retrieval matches or exceeds
the extracted graph representations those frameworks produce, read under a common
retrieval-and-rendering pipeline, at zero graph-construction tokens\linebreak
(extractor-conditional)}
(extraction cost 0.03M to 1.7M tokens per network, which excludes build, inference, and deployment
cost; Section~\ref{sec:results-gllm}). One framework's native retrieval scheme, HippoRAG's
personalized PageRank, was reimplemented and evaluated; the packaged native retrievers of all
three frameworks were not run. The
comparison is extractor- and reader-conditional: the
development-network parity held only against a frontier proprietary extractor, and swapping in an
open-weight one drops LLM-extracted GraphRAG to 0.66 while anchored stays 0.98
(Section~\ref{sec:results-extractor}). At fixed anchored order the standards-native graph still leads
on the held-out network where the budget binds (0.96 versus 0.80) and on MicroGrid, where it does
not (0.92 versus 0.82), a graph-source and
fidelity effect, not a rendering-order artifact.
\item \textbf{The finding generalizes beyond graphs.} Simple lexical retrieval over the
deterministic corpus rendering is also strong (0.86 to 0.98 on single-hop and topology strata),
with the caveat that these questions are template-generated from that corpus: what matters is a
faithful standards-native representation plus budget-aware retrieval, not an LLM in the
graph-construction loop.
\end{enumerate}

\textbf{Contributions.} Our load-bearing contribution is empirical. (i) A controlled, cross-version,
twelve-arm evaluation on six ENTSO-E CGMES conformity configurations, three of which carry
LLM-extracted graphs and so support the extraction comparison. At zero LLM graph-construction
tokens, and under the reader, extractor, budget and framework set stated in
Section~\ref{sec:method}, the standards-native graph matches or exceeds extracted graphs from three
extraction paradigms (LightRAG, Microsoft GraphRAG, HippoRAG; 0.03M to 1.7M tokens per network)
read under a common pipeline, with one native retrieval scheme reimplemented as a check. At fixed
rendering order the standards-native graph still leads, because the tested extractors lose
relational and numeric fidelity. This ties to the energy-domain finding that which CGMES encoding
overflows the budget flips between standard versions, so a deployer cannot predict it
(Section~\ref{sec:results-flip}). (ii) The seed-anchored retriever, a deterministic render order
that introduces no additional tuned or learned parameters and enforces the
locality prior at the render/truncation boundary through a checkable precondition
($D(S)\le B$) rather than at the retrieval-scoring stage, so that, unlike a
relevance-ranked budget fill, a hub that is itself relevant cannot evict the seed-incident answer
edges. (iii) A
formal characterization that makes the truncation failure and the no-harm condition explicit and
checkable: an elementary greedy-prefix answer-preservation property (Theorem~\ref{thm:preservation}),
its no-harm dominance corollary (Proposition~\ref{prop:noharm}), and a quantitative naive-failure
characterization (Theorem~\ref{thm:separation}, Proposition~\ref{prop:quant}).
(iv) Evidence-bounded practical guidance, including the cost and data-governance case for reading
the standard in place rather than shipping it to an external extractor, with a model-free scaling
demonstration to $10^5$ objects (validating answer-presence, not end-to-end reader QA accuracy) and a
100-question human-phrased robustness probe that bound the envelope of support. (v) On a second,
independently sourced base topology, RealGrid, the no-harm property holds on the attribute-lookup
stratum, where all three deterministic arms reach $1.000$. That topology's multi-hop stratum is
retired and reported as retired: its gold labels do not identify unique referents, so it cannot
support a topology-retrieval claim in either direction. In
the power-system and CGMES setting we are, to our knowledge, first to analyze budget-bounded
retrieval directly on
ENTSO-E CGMES test configurations and first to tie a standards-native-versus-LLM-extracted
graph-source result to a budget-preservation guarantee; the underlying ordering and extraction-free
themes have concurrent analogues in other domains (Section~\ref{sec:related}).

\textbf{Generality beyond the power-grid application.} The seed-anchored render-boundary rule is a
generic budget-bounded retrieval-augmented-\linebreak
generation preprocessing lever, not a
power-grid-specific device. It reorders the retrieved subgraph so that the query object's
description and its incident edges are rendered first and further hops are added only while the
cumulative render mass stays within the budget, which makes it deterministic, free of tuned or
learned parameters, and equipped with a provable greedy-prefix guarantee that preserves the
seed-local answer-bearing render units whenever their render mass fits the budget. The guarantee is
domain-agnostic in its proof, which assumes only a graph, a render vocabulary and a budget. It is
not domain-agnostic in its evidence: it is validated here only on CIM and CGMES topologies, and we
make no empirical claim about other high-degree-hub retrieval settings. What is domain-specific is
the render vocabulary and the seeding rule, not the statement of the guarantee. We validate the rule on power-grid models expressed in CIM and exchanged via
CGMES, where high-degree hubs and a hard context budget make the failure mode acute and a
standards-native graph is available without an extraction pass; that engineering setting is the
validating application, not the limit of applicability. Positioned this way, the contribution in
artificial intelligence, a render-boundary locality prior with an a priori answer-preservation
guarantee, leads, and the CGMES question-answering study is its validation.

\section{Related work}
\label{sec:related}

\textbf{GraphRAG and when it helps.} Letting a language model build a graph and summarize its
communities for global sensemaking was popularized by Microsoft's
GraphRAG~\citep{edge2024_from_local_global}; follow-on systems pursued cheaper construction
(LightRAG~\citep{guo2024_lightrag_simple_fast}) or a memory index
(HippoRAG~\citep{gutirrez2024_hipporag_neurobiologically_inspired}). As these pipelines were
benchmarked more carefully, a qualified picture emerged: on ordinary factoid questions graph-enhanced
retrieval is often no better than plain retrieval~\citep{xiang2025_when_use_graphs,
chen2025_comparing_rag_graphrag}, and its advantage erodes once an agent can
search~\citep{fan2026_still_need_graphrag}; the design space has been
surveyed~\citep{han2024_retrievalaugmented_generation_graphs}, with gains concentrated in
relation-dense engineering settings~\citep{ahmad2025_benchmarking_vector_graph,
electronics2025_document_graphrag_knowledge}. We contribute a mechanism-level account of one recurrent
failure: once a context budget is fixed, the deciding factor is often not the graph but the order in
which units render up to the budget boundary.

\textbf{Where a locality prior is enforced.} A locality prior is central to seed-anchored retrieval,
and such priors are familiar in graph retrieval: HippoRAG ranks passages by personalized PageRank
over a graph seeded at the query entities~\citep{gutirrez2024_hipporag_neurobiologically_inspired},
and ego-graph or proximity-expansion retrievers score candidate nodes and edges by graph distance
from the seeds. Our point of departure is the \emph{stage} at which the prior takes effect. Those
methods act at the retrieval-scoring boundary, settling \emph{which} units enter a candidate set;
but once that set is laid into a bounded context, a second and usually unstated ordering settles
\emph{which units survive truncation}, and it was there, not at scoring, that the standards-native
graph failed in the prior study. Seed-anchored retrieval places the prior at this
render/truncation boundary, so that with a total render order and greedy budget fill it becomes a
\emph{checkable} precondition: whenever the seed-local render mass fits the budget,
$D(S)\le B$, answer presence is guaranteed rather than merely made likely
(Section~\ref{sec:theory}). A proximity-scored retriever that does not govern render order carries
no comparable guarantee, since a hub that clears the scoring stage can still flood the window and
evict the seed-incident answer edges. The same reasoning rules out simply ranking the budget fill by
relevance, as in score-weighted truncation and token-budgeted context
selection~\citep{peng2025_adagres_budgeted}: a high-degree hub can itself be highly relevant to the
query and still flood the window, so a relevance-ranked fill does not protect the seed-incident answer
edges, whereas a hop-locality fill renders them first regardless of their relevance score. Consistent
with this, these priors and seed-anchored perform alike at reader-free answer presence; what is new is
treating the render/truncation order itself as the object of a locality prior with an a priori
answer-preservation precondition, layered on top of, not competing with, proximity-based retrieval.

\textbf{GraphRAG and LLM evaluation in the power domain.} Recent work carried these ideas into power
systems: HG-RAG constructs a hierarchical, LLM-extracted graph for power-system question
answering~\citep{2026_hgrag_hierarchical_graphenhanced}, GridCodex applies a RAG-driven framework to grid
codes~\citep{shi2025_gridcodex_ragdriven_framework}, and pre-LLM systems built knowledge-graph QA for
the power domain~\citep{tang2021_intelligent_question_answering}; on the evaluation side, ElecBench
probes LLMs on power dispatch~\citep{zhou2024_elecbench_power_dispatch} and industry assessments
caution about LLM-assistant reliability in operational settings~\citep{epri2025_epri_2025_llm}. None
treat the engineering standard itself as the graph source and ground truth; closing that gap was the
aim of a prior controlled study~\citep{prior2026_trustworthy_kg_cgmes}, which the present paper
extends with a formal characterization, a new retrieval method, and a six-network cross-version
evaluation.

\textbf{CIM/CGMES as a semantic resource.} Viewing the CIM as an ontology is long-standing
practice~\citep{gaha2013ontology, schumilin2017ontology}; IEC 61970-501 fixes the CIM RDF
schema~\citep{iec61970_501}, and ENTSO-E publishes the conformity test configurations we
use~\citep{entsoe2014cgmes}. Because that schema makes a CGMES export directly queryable, this line
has long treated the standard as a semantic graph accessed via SPARQL rather than reconstructed by a
model, so reading the standard directly is itself established practice; our contribution is not that
move but tying it to budget-bounded rendering with a preservation guarantee. A parallel line of work
measures how faithfully LLMs reconstruct structured knowledge from
text~\citep{mihindukulasooriya2023text2kgbench, chen2025_are_large_language}, and concurrent GraphRAG
work in other domains likewise argues that LLM extraction is avoidable, building the index from
data-derived structure for generic documents~\citep{prosvirnin2026_contextrag_extractionfree} or from
deterministic source-code structure~\citep{chinthareddy2026_ast_vs_llm_graphrag}. Our findings offer
the complementary case in which a canonical structure is already present as a governed engineering
standard: the right move is not to extract it more faithfully but not to extract it at all.

\textbf{CGMES tooling and conformity assessment.} Established power-system tooling for CGMES, the
network-analysis libraries powsybl~\citep{powsybl} and pandapower~\citep{thurner2018pandapower} and
the ENTSO-E conformity-assessment workflow, assumes full in-memory access to the parsed model or
operates on small inputs, and does not ask what survives when that model must be rendered into a
bounded LLM context. In the power-system and CGMES setting, and to our knowledge, this is the first
work to analyze budget-bounded retrieval directly on ENTSO-E CGMES
configurations and to tie that analysis to a budget-preservation guarantee, positioning the
contribution against power-system graph retrieval and not only generic AI/RAG pipelines.

\textbf{Rendering into a context budget.} Any deployable GraphRAG system must serialize retrieved
structure into a bounded context window, and overstuffing that window is known to degrade access to
relevant content placed in the middle of a long context~\citep{liu2023_lost_in_the_middle}; the
specific failure of a high-degree entity whose neighbourhood exceeds the window and is involuntarily
truncated also surfaces in graph-retrieval attack
settings~\citep{gu2026_graphsteal_supernode}. Our contribution is not to discover this phenomenon but
to tie it quantitatively to a specific naive render order (Section~\ref{sec:theory}). Budget-aware
selection itself is well studied: token-budgeted context selection optimizes a relevance and
redundancy objective under a budget~\citep{peng2025_adagres_budgeted}. Our answer-preservation statement is weaker and
narrower than that machinery, an exact but elementary greedy-prefix property that holds only for the
locality-priority order and only for answer presence, not an approximate optimality bound over a
relevance objective. What we add is placing that checkable precondition at the render/truncation
boundary for a standards-native graph; we do not claim a general new guarantee for GraphRAG context
assembly.

\section{Background}
\label{sec:background}

\textbf{CGMES models.} A CGMES export is a collection of RDF/XML profiles (equipment, topology,
steady-state hypothesis, state variables) describing a network as typed objects (substations, voltage
levels, lines, transformers, machines, topological nodes) joined by schema-defined
associations~\citep{iec61970, iec61970_501, entsoe2014cgmes}. Because the schema is curated and the
associations explicit, the export is already a relational model of the grid before any learning
component is introduced. We parse it deterministically into three artefacts.
The first is a standards-native graph $G_{\mathrm{STD}}$ (raw CIM graph), in which every object
becomes a node carrying its numeric attributes and every retained schema association becomes an
edge. The second is a derived support graph $G_{\mathrm{HYB}}$, in which the raw terminal-mediated
associations are replaced by five derived relations: containment, electrical connectivity,
node-connection, voltage, and part-of. The third is a deterministic corpus, a template rendering of the same facts into text (for example,
``AC line segment 82-96 connects topological node Baileysv to topological node Logan. It has a length
of 101.782 km, a resistance of 2.82269 ohm, a reactance of 9.23472 ohm, and operates at 132 kV.'').
No language model is involved in producing the ground truth, the corpus, or the questions.

\textbf{The frozen QA pipeline.} Every arm shares a single pipeline, following the prior controlled
study~\citep{prior2026_trustworthy_kg_cgmes}. Seed entities are located by matching node names
against each question; the retrieved structure is rendered into text and hard-capped at $B = 8{,}000$
characters (stated in characters because the renderer is deterministic and tokenizer-independent); a
frozen local reader, Qwen2.5-7B-Instruct in bfloat16 with greedy decoding~\citep{qwen2.5}, produces
an answer, and a deterministic scorer checks it against a machine-checkable gold specification. Across
arms only the underlying graph and its render order differ; the seeder, budget, reader, and scorer are
fixed. Questions fall into five strata: S1 single-hop attribute lookups (for example, for a given
transformer, what is its nominal voltage?), S2 seed-local line-to-node topology (for example, which
lines connect one bus to another under the CGMES 3.0 node-breaker representation?), S3 multi-hop
containment chains, S4 aggregation counts, and S5 global superlatives.

\textbf{The problem.} On the larger development network the prior study reported $G_{\mathrm{STD}}$
under the standard rendering at 0.48 overall against 0.98 for a language-model-extracted graph, even
though the standards-native graph contained strictly more correct structure; the proximate signal was
truncation, with 45 of 50 contexts reaching the cap. This raises the central question: is the deficit
intrinsic to standards-native representations, or an artefact of rendering into a bounded context, and
if the latter, can a rendering order with a provable guarantee close the gap at zero language-model
graph-construction tokens?

\section{Budget-bounded retrieval on CGMES power-system graphs}
\label{sec:theory}

This section is a formal characterization of the empirical mechanism of
Section~\ref{sec:results}, not a body of deep theory. The paper's load-bearing contribution is
empirical, the zero-token cost and fidelity-versus-extraction results of
Section~\ref{sec:results-gllm} and the encoding-dependent overflow flip of
Section~\ref{sec:results-flip}; the role of this section is narrower, to make the truncation failure
and the no-harm condition explicit and checkable. The statements are intentionally simple, elementary
greedy-prefix facts about render order under truncation rather than new mathematics. The one a
practitioner needs is no-harm dominance (Proposition~\ref{prop:noharm}): seed-anchored rendering is
never worse than naive rendering in answer presence, so within the guaranteed regime (seed-local
answers whose render mass fits the budget) adopting it cannot regress a query, the
assurance needed before replacing a deployed renderer and the one our failed threshold heuristic
(Supplementary Material, Section~S1) lacked. Every other statement supports it: answer
preservation (Theorem~\ref{thm:preservation}) and the naive-rendering failure mode it improves upon
(Theorem~\ref{thm:separation}, Proposition~\ref{prop:quant}).

Let $G=(V,E)$ be the retrieval graph (undirected; each edge carries a relation label), and
fix a context budget $B$ in characters (the frozen pipeline uses $B{=}8{,}000$).

\begin{definition}[Render units and orderings]
Each node-description occurrence $x=\mathrm{desc}(v)$ and each edge-rendering occurrence
$x=\mathrm{rend}(e)$, for $e=(u,r,w)$, is a render unit identified by its source node or edge;
occurrences remain distinct even if their rendered strings coincide. The quantity
$\mathrm{len}(x)$ is the unit's rendered length including its separator. Fix injective,
deterministic identifiers $\mathrm{id}_V:V\to\mathbb{N}$ and
$\mathrm{id}_E:E\to\mathbb{N}$, and set
\[
\mathrm{id}(\mathrm{desc}(v))=\mathrm{id}_V(v),
\qquad
\mathrm{id}(\mathrm{rend}(e))=\mathrm{id}_E(e).
\]
For a query $q$ with seed set $S=S(q)\subseteq V$ (the deterministic name-match rule shared
by all arms) and hop bound $k$ (the frozen experiments use $k=2$), the candidate universe is
\begin{multline*}
U_k(S)=\{\mathrm{desc}(v): d(v,S)\le k\}\\
\cup\{\mathrm{rend}(e): e=(u,r,w),\ d(u,S)\le k,\ d(w,S)\le k\},
\end{multline*}
where $d(\cdot,S)$ is hop distance to the nearest seed. The hop level of a unit is
\[
\mathrm{hop}(\mathrm{desc}(v),S)=d(v,S),\qquad
\mathrm{hop}(\mathrm{rend}(e),S)=\max\{d(u,S),d(w,S)\}.
\]
Thus $U_k(S)$ is exactly the set of candidate units with hop level at most $k$. Define the
anchored tier
\[
\mathrm{tier}_S(x)=
\begin{cases}
0, & x=\mathrm{desc}(v),\ v\in S,\\
1, & x=\mathrm{rend}(e),\ e\text{ is incident to a vertex in }S,\\
2d(v,S), & x=\mathrm{desc}(v),\ v\notin S,\\
2\,\mathrm{hop}(x,S)+1, &
x=\mathrm{rend}(e),\ e\text{ is not incident to }S.
\end{cases}
\]
The naive and anchored rank keys are
\[
\begin{aligned}
\kappa_{\mathrm{nai}}(x)
&=
\begin{cases}
(0,\mathrm{id}(x)), & x\text{ is a description unit},\\
(1,\mathrm{id}(x)), & x\text{ is an edge-rendering unit},
\end{cases}\\
\kappa_{\mathrm{anc}}(x)
&=(\mathrm{tier}_S(x),\mathrm{id}(x)).
\end{aligned}
\]
The orders $\pi_{\mathrm{nai}}$ and $\pi_{\mathrm{anc}}$ are the increasing lexicographic
orders of these keys. Injectivity of the identifiers makes both keys total orders on every
finite $U_k(S)$. The context $C(\pi,B)$ is the set of units in the maximal prefix of $\pi$
whose total length is at most $B$ (the greedy hard cap used by the pipeline).
\end{definition}

\textbf{Standing assumptions.} $G$ is finite, so $U_k(S)$ is finite; every render unit has
positive length; and $\mathrm{len}$, measured in characters of the rendered UTF-8 text,
includes the fixed single-newline separator, so a prefix's length is the sum of its units'
lengths. All hold for the CGMES render units used here.

\begin{definition}[Answer presence and seed-locality]
The gold answer of $q$ is verifiable from an answer-bearing unit set (minimal with respect to
set inclusion) $A(q)\subseteq U_k(S)$; for S1 attribute questions
$A(q)=\{\mathrm{desc}(s)\}$ for a seed $s$, and for S2 topology questions $A(q)$ is the set
of answer-bearing edge-rendering units incident to the seed. Query $q$ is
\emph{retrieval-satisfied} by $\pi$ iff $A(q)\subseteq C(\pi,B)$. The set $A(q)$ is
\emph{seed-local} iff every unit in it is a seed description or a seed-incident edge
rendering. Let
\[
\begin{aligned}
\mathrm{SL}(S)=\{x\in U_k(S):{}&
x\text{ is a seed description}\\
&\text{or a seed-incident edge rendering}\}
\end{aligned}
\]
and $D(S)=\sum_{x\in\mathrm{SL}(S)}\mathrm{len}(x)$.
\end{definition}

The analysis compares the two total orders above (Figure~\ref{fig:ordering}).
The naive order renders all entity descriptions by node identifier and then all edges by edge
identifier. The anchored order renders seed descriptions first, then every seed-incident edge,
and then, for $j=1,\ldots,k$, the as-yet-unemitted $j$-hop descriptions followed by the
as-yet-unemitted $j$-hop edges, with identifiers breaking ties inside each tier. In particular, the
seed-local units are the first two tiers; a seed-incident edge is not emitted again in its
ordinary hop tier.

\paragraph{Scope: the analysed stream and the implemented stream}
The analysis above assigns an edge a hop level using the maximum endpoint distance, that is
$\max\{d(u,S),d(w,S)\}$. The implementation instead assigns edge priority using the minimum
endpoint distance, $\min\{d(u,S),d(w,S)\}$. The two orders therefore agree on seed-local units but
can differ beyond that tier. In particular, under the implemented min-endpoint priority, an edge
whose endpoint distances are $(1,2)$ is placed in the same edge-priority tier as an edge whose
endpoint distances are $(1,1)$, even though their max-hop levels are $2$ and $1$ respectively.
Identifier tie-breaking can therefore place an out-of-radius $(1,2)$ edge before a remaining
radius-1 $(1,1)$ edge, so the set of max-hop radius-1 units need not form a prefix of the
implemented order. Consequently, the $r$-local greedy-prefix argument does not apply to the shipped
min-hop renderer. Theorem~\ref{thm:preservation} and Proposition~\ref{prop:noharm} are unaffected,
because they require only that seed descriptions and seed-incident edges precede non-seed-local
units, which both orders satisfy. We therefore make no $r$-local preservation claim beyond
seed-local evidence for the evaluated implementation; the max-hop analytical extension is reported
separately in Supplementary Material, Section~S2. The empirical results are unaffected: every
reported number comes from the implemented order.

\begin{figure}[t]
\centering
\includegraphics[width=\linewidth]{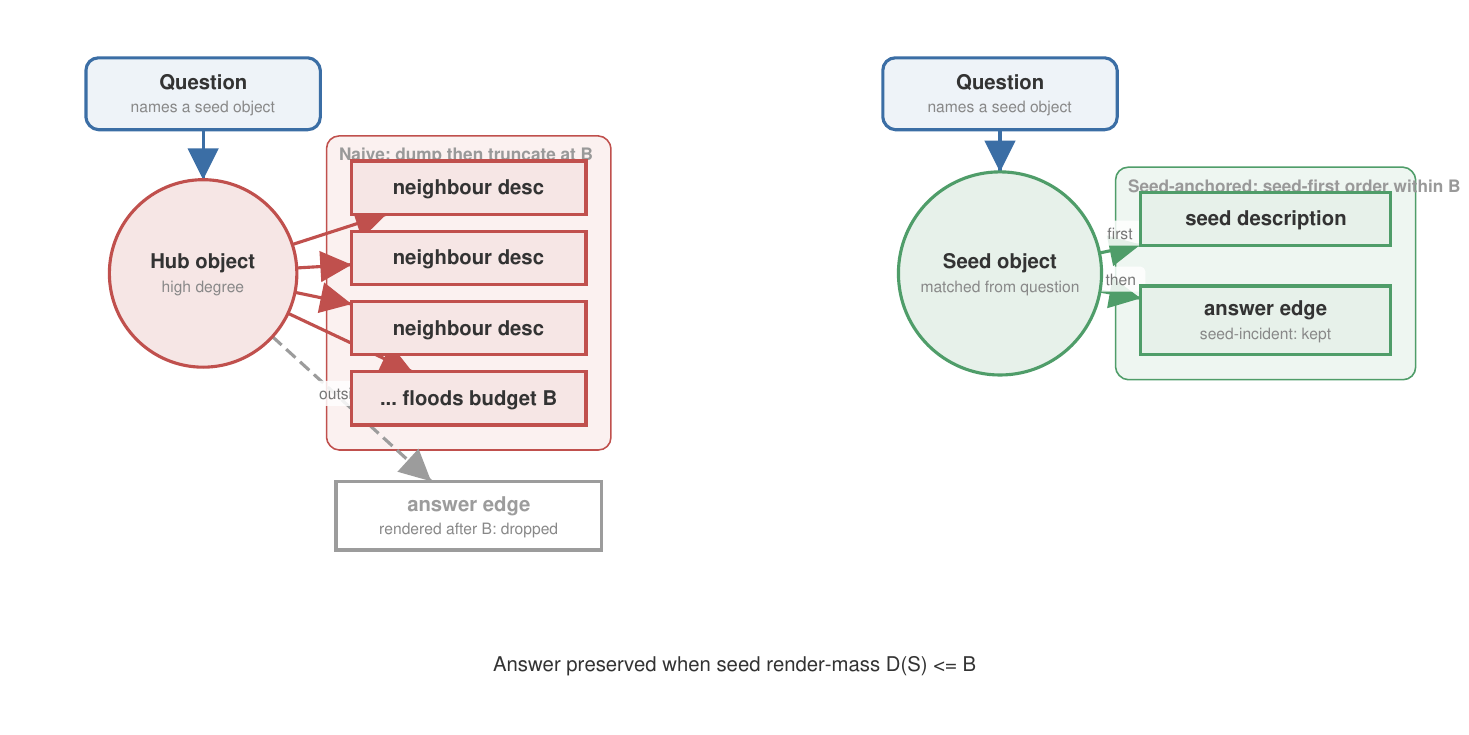}
\caption{Rendering order under a fixed budget. Naive dump-then-truncate emits all
descriptions before any edge, so a high-degree hub floods the budget and truncates the
seed-incident answer edges. Seed-anchored rendering emits the seed and its incident edges
first, then fills outward by hop, preserving budget-fitting seed-local answers.}
\label{fig:ordering}
\end{figure}

\begin{theorem}[Answer preservation]
\label{thm:preservation}
If $A(q)$ is seed-local and $D(S)\le B$, then $A(q)\subseteq C(\pi_{\mathrm{anc}},B)$.
\end{theorem}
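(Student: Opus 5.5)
The plan is a greedy-prefix argument: show that the seed-local units form an initial segment of $\pi_{\mathrm{anc}}$, and that this initial segment fits inside the greedy prefix whenever $D(S)\le B$.

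\textbf{Step 1: identify the seed-local units with the low tiers.} Let $P=\{x\in U_k(S):\mathrm{tier}_S(x)\le 1\}$. By the definition of $\mathrm{tier}_S$, tier $0$ consists exactly of the descriptions $\mathrm{desc}(v)$ with $v\in S$. Tier $1$ consists exactly of the edge renderings incident to a seed. Every other unit is either a non-seed description, with tier $2d(v,S)\ge 2$, or a non-incident edge rendering, with tier $2\,\mathrm{hop}(x,S)+1\ge 3$. Hence $P=\mathrm{SL}(S)$.

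\textbf{Step 2: show $\mathrm{SL}(S)$ is an initial segment of $\pi_{\mathrm{anc}}$.} The key $\kappa_{\mathrm{anc}}$ is compared lexicographically with the tier as its first coordinate. So every unit of tier at most $1$ strictly precedes every unit of tier at least $2$. The elements of $\mathrm{SL}(S)$ therefore occupy the first $m=|\mathrm{SL}(S)|$ positions of $\pi_{\mathrm{anc}}$. The identifier tie-break only permutes units within a tier, so it does not affect this.

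\textbf{Step 3: compare with the greedy prefix.} The length of this $m$-element prefix is $D(S)$, because lengths are additive and the separator is included in each unit's length. By hypothesis $D(S)\le B$. The context $C(\pi_{\mathrm{anc}},B)$ is the \emph{maximal} prefix whose total length is at most $B$. Prefix lengths are strictly increasing, since every unit has positive length. So the admissible prefixes form a down-closed chain, and the maximal one contains every admissible prefix, including the first $m$ units. Thus $\mathrm{SL}(S)\subseteq C(\pi_{\mathrm{anc}},B)$.

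\textbf{Step 4: conclude.} Seed-locality of $A(q)$ means $A(q)\subseteq\mathrm{SL}(S)$; membership in $U_k(S)$ is automatic, since $A(q)\subseteq U_k(S)$ by definition. Chaining the inclusions gives $A(q)\subseteq C(\pi_{\mathrm{anc}},B)$.

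\textbf{Where the care lies.} There is no deep obstacle here; the points that need attention are bookkeeping.
\begin{itemize}
\item Step 1 must check that a seed-incident edge receives tier $1$ and not the value $2\,\mathrm{hop}+1$. The case split in $\mathrm{tier}_S$ is ordered, so this holds. In particular, an edge joining two seeds has hop $0$ but is still assigned tier $1$.
\item Step 3 must make sure ``maximal prefix'' is read as the longest admissible prefix, which is where positive lengths are used.
\item The proof uses only the fact that seed-local units precede all others. The same argument therefore covers the implemented min-endpoint order discussed in the scope paragraph.
\end{itemize}
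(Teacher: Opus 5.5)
Your proposal is correct and follows the same route as the paper's proof: the seed-local units form an initial segment of $\pi_{\mathrm{anc}}$ with total length $D(S)\le B$, so the greedy fill admits all of $\mathrm{SL}(S)$, and $A(q)\subseteq\mathrm{SL}(S)$ concludes. What you add is the bookkeeping that the paper's three-line proof leaves implicit: every non-seed-local unit has tier at least $2$, and the maximal admissible prefix contains the first $|\mathrm{SL}(S)|$ units.
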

\begin{proof}
$\pi_{\mathrm{anc}}$ renders every unit of $\mathrm{SL}(S)$ strictly before any
non-seed-local unit. Their total length is $D(S)\le B$, so the greedy fill admits all of
$\mathrm{SL}(S)$ into the budget prefix; $A(q)\subseteq \mathrm{SL}(S)$ concludes.
\end{proof}

In power-system terms, seed-anchored ordering prioritises the seed-local neighbourhood, the
equipment a queried line or transformer actually connects to, so localized energy-system queries about
that component stay safe under the budget.
The precondition $D(S)\le B$ is checkable a priori by summing the seed and seed-incident edge lengths
before retrieval (cost $O(|S|\cdot\deg)$), and it is mild: seeds are low-degree query entities (a
line, a transformer), and $D(S)\ll B$ for all S1/S2 items in the studied networks. The anchored order is not globally nondecreasing in hop when there are multiple seeds:
its seed-incident tier may contain both hop-0 and hop-1 edges. Nevertheless, every
seed-incident edge has hop at most 1, and after that tier the un-emitted units are ordered
outward by hop.

\paragraph{Analytical extension, and why it is not a guarantee of the evaluated method}
Under a max-hop edge-priority variant, all render units within radius $r$ form a prefix whenever
their aggregate render mass fits the budget, so an $r$-local preservation statement follows for the
max-hop analytical order. That statement and its proof are in Supplementary Material, Section~S2.
The evaluated implementation instead prioritises edges by minimum endpoint distance. Since the
proofs of Theorem~\ref{thm:preservation} and Proposition~\ref{prop:noharm} use only that every
seed-local unit precedes every non-seed-local unit, a property both edge-priority variants share
because the seed descriptions and seed-incident edges occupy the first two tiers under either
rule, the seed-local guarantees hold for the implemented min-hop renderer as well. The present
paper therefore makes no $r$-local preservation guarantee beyond seed-local evidence, and results
for containment-chain questions are empirical. Theorem~\ref{thm:preservation} and
Proposition~\ref{prop:noharm} are the formal guarantees for the implemented renderer;
Theorem~\ref{thm:preservation} is the sharper result for the seed-local tier-0/tier-1 subset, since
it requires only $D(S)\le B$ rather than the whole 1-hop mass.

\begin{theorem}[Naive rendering can fail; separation]
\label{thm:separation}
There exist $G$ and $q$ with seed-local $A(q)$ and $D(S)\le B$ such that
$A(q)\not\subseteq C(\pi_{\mathrm{nai}},B)$.
\end{theorem}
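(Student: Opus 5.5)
This is an existence statement, so the plan is to build an explicit small instance, a "hub" example. The naive order $\pi_{\mathrm{nai}}$ emits every description unit before any edge-rendering unit. So it is enough to make the descriptions in $U_k(S)$ alone exhaust the budget $B$. At the same time, the seed-local answer set $A(q)$ must contain an edge-rendering unit, and its mass $D(S)$ must stay at most $B$. This is exactly an S2-type question, whose answer-bearing units are seed-incident edges.

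\textbf{Construction.} Take $k=2$ (or any $k\ge 2$) and a seed $s$ with $S=\{s\}$. Let $s$ be adjacent to a single hub vertex $h$ via edge $e_0=(s,r,h)$, and let $h$ be adjacent to $m$ leaves $v_1,\dots,v_m$. Set $A(q)=\{\mathrm{rend}(e_0)\}$, which is seed-local and minimal. Then $\mathrm{SL}(S)=\{\mathrm{desc}(s),\mathrm{rend}(e_0)\}$, since $e_0$ is the only seed-incident edge. Hence $D(S)=\mathrm{len}(\mathrm{desc}(s))+\mathrm{len}(\mathrm{rend}(e_0))$. Choose unit lengths, for example all equal to some $\ell>0$, and a budget $B$ with $2\ell\le B<(m+2)\ell$. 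Every $v_i$ lies at distance $2\le k$, so all $m+2$ descriptions ($s$, $h$, and the $v_i$) belong to $U_k(S)$.

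\textbf{Verification.} Under $\kappa_{\mathrm{nai}}$, all description units (key $(0,\cdot)$) precede all edge units (key $(1,\cdot)$), whatever the identifiers. The descriptions alone have total length $(m+2)\ell>B$. So the maximal prefix within budget ends strictly inside the description block, and $C(\pi_{\mathrm{nai}},B)$ contains no edge unit. In particular $\mathrm{rend}(e_0)\notin C(\pi_{\mathrm{nai}},B)$, whereas $D(S)=2\ell\le B$. For contrast, Theorem~\ref{thm:preservation} applies to the same instance and gives $\mathrm{rend}(e_0)\in C(\pi_{\mathrm{anc}},B)$, so the instance genuinely separates the two orders. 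Concrete numbers make this vivid: with $B=8000$, $\ell=40$ and $m=400$ (matching hub degrees of roughly $381$–$481$), we get $D(S)=80$ while the description mass is about $16{,}080$.

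\textbf{Expected obstacle.} The argument is elementary, so the main care is bookkeeping against the definitions. First, the "maximal prefix" semantics must be handled correctly: the prefix stops at the first unit that overflows and does not skip ahead. Second, $A(q)$ must satisfy the stated minimality and seed-locality conditions. Third, the example must be realizable with positive lengths and injective identifiers, which it is, since the naive ordering between blocks does not depend on identifiers at all. If one prefers the example to avoid relying on a hop-2 hub neighbourhood, the same effect can be obtained at $k=1$ by giving the seed many neighbours with long descriptions. The hub version, however, mirrors the empirical mechanism and makes clear that $D(S)$ stays small no matter how large $m$ grows.
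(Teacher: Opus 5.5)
Your construction is correct and matches the paper's approach. The seed is joined by its answer edge to a hub at distance $k-1$, and the hub's $m$ neighbour descriptions alone exceed $B$. The descriptions-first naive order therefore admits no edge rendering, while $D(S)=\mathrm{len}(\mathrm{desc}(s))+\mathrm{len}(\mathrm{rend}(e_0))$ stays within budget, which is the hub-flooding mechanism the paper quantifies in Proposition~\ref{prop:quant}.
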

\begin{proof}
Proof in Supplementary Material, Section~S2.
\end{proof}

The separation is not merely existential; the loss is quantitative in a hub's
distinct-neighbour count.

\begin{proposition}[Quantitative hub flooding]
\label{prop:quant}
Under $\pi_{\mathrm{nai}}$, let the entity descriptions of $U_k(S)$ have lengths summing to
$\Delta_{\mathrm{desc}}$, rendered strictly before any edge. Let $N(h)$ be the set of
distinct non-self neighbours of a hub $h$, and suppose
$d(h,S)\le k-1$ and $m=|N(h)|\ge1$. Then every $v\in N(h)$ lies within hop $k$, and
\[
\Delta_{\mathrm{desc}}\ge m\,\ell,\qquad
\ell=\min_{v\in N(h)}\mathrm{len}(\mathrm{desc}(v))>0.
\]
Whenever $B<\Delta_{\mathrm{desc}}$ (in particular, whenever $B<m\,\ell$),
$C(\pi_{\mathrm{nai}},B)$ contains no edge rendering, so every edge-rendering unit in
$A(q)$ is absent. If
\[
L=\sum_{x\in U_k(S)}\mathrm{len}(x),\qquad
T=\sum_{x\in C(\pi_{\mathrm{nai}},B)}\mathrm{len}(x),
\]
then $T/L\le B/L$, and all admitted mass is description mass. If, in addition, $A(q)$ is
seed-local and $D(S)\le B$, then
$A(q)\subseteq C(\pi_{\mathrm{anc}},B)$ by the answer-preservation theorem. Consequently,
for any such query whose $A(q)$ contains at least one edge-rendering unit, naive rendering
is not retrieval-satisfying while anchored rendering is retrieval-satisfying.
\end{proposition}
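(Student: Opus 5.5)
The proof goes through the claims of Proposition~\ref{prop:quant} in the order stated, each by a short direct argument from the definitions of $U_k(S)$, $\pi_{\mathrm{nai}}$ and $C(\pi,B)$.

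\emph{Hop radius.} Take $v\in N(h)$. Since $v$ is adjacent to $h$, the triangle inequality for hop distance gives $d(v,S)\le d(h,S)+1\le k$. Hence $\mathrm{desc}(v)\in U_k(S)$.

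\emph{The bound $\Delta_{\mathrm{desc}}\ge m\ell$.} The neighbours in $N(h)$ are distinct and non-self, so the units $\mathrm{desc}(v)$, $v\in N(h)$, are $m$ distinct description units of $U_k(S)$. Descriptions are identified by their source node, so they are not merged even when their strings coincide. All lengths are positive, so $\Delta_{\mathrm{desc}}$ is at least the sum over these $m$ units, which is at least $m\ell$. Also $\ell>0$, because $N(h)$ is finite and nonempty and every unit has positive length.

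\emph{No edge enters the context.} The key $\kappa_{\mathrm{nai}}$ puts every description (first coordinate $0$) before every edge rendering (first coordinate $1$). So $\pi_{\mathrm{nai}}$ begins with the block of all descriptions, whose total length is $\Delta_{\mathrm{desc}}$. The context $C(\pi_{\mathrm{nai}},B)$ is the maximal prefix of length at most $B$. If that prefix contained any edge unit, it would contain the whole description block, and its length would be at least $\Delta_{\mathrm{desc}}>B$, a contradiction. Therefore the context consists only of descriptions, and no edge-rendering unit of $A(q)$ is present. The parenthetical case follows because $B<m\ell\le\Delta_{\mathrm{desc}}$.

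\emph{The mass statement.} By definition of the greedy prefix, $T\le B$, so $T/L\le B/L$ (note $L>0$). All admitted mass is description mass by the previous step.

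\emph{Comparison with anchored rendering.} Under the extra hypotheses that $A(q)$ is seed-local and $D(S)\le B$, Theorem~\ref{thm:preservation} gives $A(q)\subseteq C(\pi_{\mathrm{anc}},B)$. Now suppose $A(q)$ also contains an edge-rendering unit. The previous step shows that unit is missing from $C(\pi_{\mathrm{nai}},B)$, so naive rendering is not retrieval-satisfying, while anchored rendering is.

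The only points needing care are bookkeeping rather than mathematics: the distinctness of the $m$ description units, which comes from the occurrence-based identification of units, and the prefix-maximality step ruling out any edge in the naive context. I expect neither to be a real obstacle.
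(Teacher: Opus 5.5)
Your proposal is correct and is essentially the paper's own elementary greedy-prefix argument, the same style as the displayed proof of Theorem~\ref{thm:preservation}. It covers every clause: the one-step triangle inequality gives the hop radius, the $m$ distinct neighbour description units give $\Delta_{\mathrm{desc}}\ge m\ell$, and the descriptions-before-edges block rules out any edge in $C(\pi_{\mathrm{nai}},B)$ once $B<\Delta_{\mathrm{desc}}$; $T\le B$ gives the mass ratio, and Theorem~\ref{thm:preservation} handles the anchored side.
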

\begin{proof}
Proof in Supplementary Material, Section~S2.
\end{proof}

On SmallGrid ($m{=}381$, $\ell$ tens of characters, $B{=}8{,}000$) the condition $B<m\,\ell$
holds with room to spare, matching the near-total collapse of naive rendering to S2 accuracy
$0.12$ (and to $0.00$ on the node-breaker SmallGrid-3.0, where the flooding hub sits on the
derived support graph; Section~\ref{sec:results-main}). In power-system terms, a high-degree hub, a large substation, a shared voltage level, or a
busbar, floods the budget with neighbour descriptions and truncates the incident connectivity edges,
hiding the actual topology around the seed component. This flooding is the known
supernode-truncation and lost-in-the-middle failure of long
contexts~\citep{gu2026_graphsteal_supernode, liu2023_lost_in_the_middle}; the proposition does not
discover it but ties it quantitatively to the naive descriptions-first order.

\begin{proposition}[No-harm dominance]
\label{prop:noharm}
In the seed-local, $D(S)\le B$ regime: every query retrieval-satisfied by
$\pi_{\mathrm{nai}}$ is retrieval-satisfied by $\pi_{\mathrm{anc}}$; and when the whole
universe fits ($\sum_{x\in U_k(S)}\mathrm{len}(x)\le B$) the two contexts contain the same
unit set.
\end{proposition}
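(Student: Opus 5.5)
The plan is to prove the two clauses separately. Both follow almost directly from Theorem~\ref{thm:preservation} and from the definition of the greedy context $C(\pi,B)$ as the maximal budget-fitting prefix.

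For the first clause, fix a query $q$ in the stated regime: $A(q)$ is seed-local and $D(S)\le B$. I will not need any property of $\pi_{\mathrm{nai}}$. Theorem~\ref{thm:preservation} already gives $A(q)\subseteq C(\pi_{\mathrm{anc}},B)$ unconditionally in this regime, so $q$ is retrieval-satisfied by $\pi_{\mathrm{anc}}$ whether or not it is satisfied by $\pi_{\mathrm{nai}}$. The implication ``satisfied by naive $\Rightarrow$ satisfied by anchored'' therefore holds trivially, because its conclusion always holds. I will also note that the dominance is strict in general, by pointing to Theorem~\ref{thm:separation} and Proposition~\ref{prop:quant}. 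That remark is not needed for the proof.

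For the second clause, suppose $L=\sum_{x\in U_k(S)}\mathrm{len}(x)\le B$. The key step is to show that for any total order $\pi$ on the finite set $U_k(S)$, the context $C(\pi,B)$ equals all of $U_k(S)$. The full sequence $\pi$ is itself a prefix of $\pi$, and its total length $L$ is at most $B$. Prefix lengths are nondecreasing in prefix length because every unit has positive length (standing assumptions). Hence the maximal prefix with length at most $B$ is the whole sequence. Applying this to $\pi_{\mathrm{nai}}$ and $\pi_{\mathrm{anc}}$ gives $C(\pi_{\mathrm{nai}},B)=U_k(S)=C(\pi_{\mathrm{anc}},B)$ as unit sets. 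The order in which the units appear may still differ; the claim concerns sets only, consistent with $C$ being defined as a set. This clause holds even without the seed-local assumption.

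The main obstacle is not mathematical but one of scope. The argument must apply to the implemented min-hop renderer as well as to the analysed $\pi_{\mathrm{anc}}$. Theorem~\ref{thm:preservation} uses only the fact that $\mathrm{SL}(S)$ occupies the first two tiers. Following the scope paragraph, I will state explicitly that this is the only property invoked, so the proposition carries over to the shipped renderer. I will also check that $\mathrm{SL}(S)\subseteq U_k(S)$ requires $k\ge1$, which holds for the frozen $k=2$.
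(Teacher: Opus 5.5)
Your proposal is correct and follows the paper's proof. The full-universe clause holds because both greedy fills admit all of $U_k(S)$, and the dominance clause holds because Theorem~\ref{thm:preservation} gives $A(q)\subseteq C(\pi_{\mathrm{anc}},B)$ whatever $\pi_{\mathrm{nai}}$ does, with strictness witnessed by Theorem~\ref{thm:separation}. You treat the two clauses separately rather than with the paper's ``otherwise'' case split, and you spell out the prefix argument; both are just slightly more careful presentations of the same reasoning.
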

\begin{proof}
If the universe fits, both orders render all of $U_k(S)$: identical sets. Otherwise
Theorem~\ref{thm:preservation} already gives $A(q)\subseteq C(\pi_{\mathrm{anc}},B)$
independently of $\pi_{\mathrm{nai}}$'s outcome; containment of satisfied-query sets
follows; the inclusion can be strict, as witnessed by the instances of Theorem~\ref{thm:separation}.
\end{proof}

\begin{corollary}
Restricted to seed-local answers satisfying the precondition $D(S)\le B$, seed-anchored
retrieval is never worse than naive rendering in answer \emph{presence} in the rendered
context, so a presence regression of the kind we observed for a threshold heuristic
(Supplementary Material, Section~S1) is impossible by construction. This says nothing about
non-seed-local queries whose answers lie outside the $k$-local neighbourhood (multi-hop
answers, and the S3 to S5 aggregation and superlative strata), and it does not guarantee that
the reader extracts an answer that is present.
\end{corollary}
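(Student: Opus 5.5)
The corollary is a restatement of Proposition~\ref{prop:noharm} in the language of answer presence, so I will derive it from that proposition and Theorem~\ref{thm:preservation} rather than argue from scratch. I read ``answer presence in the rendered context'' as retrieval-satisfaction, $A(q)\subseteq C(\pi,B)$. I read ``never worse'' as the claim that the set of satisfied queries under $\pi_{\mathrm{nai}}$ is contained in the set under $\pi_{\mathrm{anc}}$. Both readings are restricted to queries with seed-local $A(q)$ and $D(S)\le B$.

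\textbf{Main step.} Fix such a query $q$. By Theorem~\ref{thm:preservation}, $A(q)\subseteq C(\pi_{\mathrm{anc}},B)$ holds unconditionally in this regime. This is stronger than the containment required: anchored rendering satisfies every such query whether or not naive rendering does. Therefore no query in the regime can be satisfied by naive rendering yet missed by anchored rendering, and ``never worse'' follows. I will also note that the inclusion can be strict, citing the instances from Theorem~\ref{thm:separation} and the hub-flooding condition of Proposition~\ref{prop:quant}, so that ``never worse'' is not mistaken for ``always equal''.

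\textbf{Threshold heuristic.} For the statement that a presence regression of the threshold-heuristic kind is ``impossible by construction'', I must identify that regression as an instance of the excluded event. The event is a query in the regime where the deployed order retained $A(q)$ and the replacement order lost it. Theorem~\ref{thm:preservation} rules this out for $\pi_{\mathrm{anc}}$, because its conclusion does not depend on the baseline order. The argument uses only two facts: every unit of $\mathrm{SL}(S)$ precedes every non-seed-local unit, and the prefix is filled greedily. As the scope paragraph explains, both the max-hop analysed order and the implemented min-hop order have this property, so the claim applies to the shipped renderer as well. The phrase ``by construction'' should be tied to this ordering property, not to any empirical observation.

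\textbf{Negative clauses.} The limitations are scope remarks, not theorems, and I will justify each by pointing to a hypothesis that fails. For non-seed-local $A(q)$, including S3 chains and S4/S5 answers, the conclusion $A(q)\subseteq\mathrm{SL}(S)$ in the proof of Theorem~\ref{thm:preservation} is unavailable. The min-hop order also gives no $r$-local prefix property, so presence is not guaranteed. For reader extraction, retrieval-satisfaction is a statement about the context set alone, and nothing in the definitions constrains the reader's output. The main obstacle is wording rather than mathematics: the corollary must not be read as covering queries outside the regime or end-to-end accuracy. I will state the restriction explicitly in terms of the hypotheses of Proposition~\ref{prop:noharm}.
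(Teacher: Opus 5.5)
Your proposal is correct and takes essentially the paper's route. The corollary is read off Proposition~\ref{prop:noharm}, whose proof is your main step: Theorem~\ref{thm:preservation} gives $A(q)\subseteq C(\pi_{\mathrm{anc}},B)$ whatever $\pi_{\mathrm{nai}}$ does, and strictness is witnessed by Theorem~\ref{thm:separation}. Your scope remarks, including that only the seed-local-first property is used so the implemented min-hop renderer is covered, match the paper's, and skipping its ``universe fits'' case split loses nothing for this containment.
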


\textbf{Scope.} The guarantees are retrieval-level: they bound answer \emph{presence}, a
necessary condition, not whether the reader extracts it (empirical, Section~\ref{sec:results}).
The formal guarantee covers S1 and S2 (Theorem~\ref{thm:preservation}). Results for S3
containment chains are empirical, because the implemented min-hop edge priority does not admit the
radius argument. S4 aggregation answers are not single units and S5
superlatives name no seed, so neither is seed-local, and we report both as the
frontier of the guarantee.

\section{Method and experimental design}
\label{sec:method}

Our study fixes every stage of a budget-bounded retrieval pipeline and varies only the
order in which render units are placed into the context window. The methodological advance is the
render-boundary guarantee: a deterministic ordering that introduces no additional tuned or learned parameters and that, under the checkable budget
precondition, provably preserves the seed-local answer-bearing render units whose render mass fits
the budget, so that a high-degree hub cannot evict them; when that precondition fails the ordering
changes which material survives truncation rather than preventing overflow. This section states the
seed-anchored retrieval procedure, briefly notes the abandoned predecessor that led us to it,
and specifies the networks, arms, questions, scoring, and statistics used to evaluate it.

\paragraph{What the benchmark measures}
The primary objective is to characterize budget-bounded retrieval and ordering effects, not
general-purpose question answering. The questions are template-generated deterministically from the
same CGMES export that supplies the retrieval corpus, so 83\% of the gold answers appear verbatim in
the corpus. The rate is measured by one rule: the exact \texttt{gold\_answer} string occurs in that
network's \texttt{corpus.txt}, counted over the 300 S1 and S2 items. We therefore read the
benchmark as a retrieval-under-budget probe, gauging whether an
answer-bearing render unit survives truncation, not whether a reader can paraphrase or reason over
free text. Because the benchmark is self-referential, we defer general-QA claims and treat two external checks as
the load-bearing robustness evidence: a non-verbatim derived-answer subset whose gold is absent from
the corpus, and a 100-question human-phrased stress test, both examined as threats in the
Limitations.

\subsection{Seed-anchored retrieval}
Algorithm~1 (Figure~\ref{alg:anchored}) realizes the anchored ordering $\pi_{\mathrm{anc}}$. A
breadth-first pass labels every node with its hop distance from the query seeds, and render
units are then emitted in the priority order established in Section~\ref{sec:theory} while a
greedy fill consumes the budget. The procedure is deterministic and introduces no additional method-specific tuned or learned parameters beyond the shared hop bound and context budget (no
learned parameters, no degree threshold, no LLM tokens), and it changes only the rendering order of
the baseline retriever. The graph, the seeding rule, the hop bound, the budget, the reader, and the
scorer are held identical to the naive arm, so any difference in outcome is attributable to order
alone. The shared name-match seeding assumes clean identifiers;
on noisy real-world inputs it can miss aliased or misspelled operator names, a robustness gap we
do not address here.

\begin{figure}[t]
\begin{center}
\fbox{\begin{minipage}{.92\linewidth}
\textbf{Algorithm 1: seed-anchored budget-bounded retrieval.}\\[2pt]
\emph{Input:} graph $(V,E)$, query $q$, budget $B$, hop bound $k$.
\begin{enumerate}\itemsep2pt
\item $S :=$ name-match seeds of $q$ (shared rule, all arms).
\item Compute hop distances $d(\cdot,S)$ by breadth-first search up to $k$.
\item Form the unit stream: seed descriptions; seed-incident edges; 1-hop descriptions;
1-hop edges; 2-hop descriptions; 2-hop edges (ties by identifier).
\item Emit units in order; stop before the first unit that would exceed $B$.
\end{enumerate}
\end{minipage}}
\end{center}
\vspace{-4pt}
\caption{Seed-anchored budget-bounded retrieval (Algorithm 1).}
\label{alg:anchored}
\end{figure}

The budget is a render-length budget $B$, measured throughout the experiments in
characters and set to $B = 8{,}000$. We use characters rather than model tokens because the
renderer is deterministic and tokenizer-independent, so the same 8{,}000-character cap
applies uniformly across arms and readers and does not import the vocabulary of any single
model into the analysis.

\subsection{A refuted predecessor}
Anchoring was not our first attempt: a fixed degree threshold that refused to expand through
high-degree nodes lifted the collapsed stratum on the development network but then fell below the
naive baseline on a previously unseen network, since no single degree cutoff separated helpful hubs
from legitimate nodes that must not be blocked. We therefore adopted the ordering of Algorithm~1,
which introduces no additional method-specific tuned or learned parameters beyond the shared hop bound and context budget and whose no-harm property makes regressions of this kind
impossible within the guaranteed regime; the full episode is reported in the Supplementary Material,
Section~S1.

\subsection{Networks, arms, and questions}
\textbf{Networks.} We evaluate on six ENTSO-E CGMES Conformity Assessment test configurations from
the powsybl-core mirror. Three are CGMES 2.4.15: MicroGrid (233 kept objects), MiniGrid (462), and
SmallGrid (1{,}398). Three are their CGMES 3.0 counterparts: MicroGrid-3.0 (317), MiniGrid-3.0 (455),
and the structurally distinct node-breaker SmallGrid-3.0 (5{,}315). The 2.4.15 exports use a
bus-branch topology whereas the 3.0 exports, most sharply SmallGrid-3.0, use the more detailed
node-breaker representation; the same physical grid is exchanged under either encoding, and, as we
show, which representation overflows the context budget flips with that choice. The 2.4.15 trio is the
development set on which the method and seeder were designed and debugged; the CGMES 3.0 trio consists
of the same underlying networks re-encoded in the newer CGMES version~\citep{entsoe_cgmes30} and is
held out from retrieval-method, seeding, hop-bound, threshold, and scorer development. Parser
compatibility fixes for nullable CGMES 3.0 fields are disclosed separately: parsing the 3.0
exports required adding \texttt{None}-guards to the CGMES parser for fields that the node-breaker
encoding may legitimately omit, and those guards change the emitted corpus text. The change is
recorded in the deviations register. It is a parsing-robustness fix rather than a method,
threshold or hop-bound decision, but the 3.0 data did prompt a code change, so the held-out claim
is scoped accordingly rather than stated absolutely.

\textbf{Arms.} The graph arms are $G_{\mathrm{STD}}$ (raw CIM graph) and $G_{\mathrm{HYB}}$ (hybrid
derived support graph), each under the naive dump-then-truncate order; seed-anchored retrieval over
$G_{\mathrm{HYB}}$, our method; and $G_{\mathrm{LLM}}$, a LightRAG~1.5.1 extraction (gleaning 2,
512-token chunks, local MiniLM embeddings~\citep{wang2020_minilm}) read under the naive order. To de-confound graph source
from rendering order we add $G_{\mathrm{LLM}+\mathrm{ANCHORED}}$, the same extracted graph under the
seed-anchored order, on the three networks with a $G_{\mathrm{LLM}}$ graph; and to test a second
framework, Microsoft GraphRAG (graphrag~3.1.0) built with the identical open-weight extractor
(gpt-oss-120b~\citep{openai2025_gptoss}), chunking (512/64), and gleaning (2), read under both orders on the same three
networks; and, as a third extraction paradigm, HippoRAG 2 (2.0.0a4)~\citep{gutirrez2025_hipporag2} built via its OpenIE module
under the same open-weight extractor and read under both orders on the same three networks. Three corpus arms operate over the deterministic text rendering: flat lexical RAG (tf-idf~\citep{salton1988}),
vector RAG (MiniLM cosine), and a long-context arm filling the budget with the corpus prefix. Every
arm shares the same seeding rule, 8{,}000-character cap, frozen reader, and deterministic scorer.

\textbf{Questions.} Each network contributes 50 pre-registered items in strata S1 and S2
(25 single-hop attribute lookups and 25 seed-local line-to-node topology items), plus 17 to
24 extended items spanning S3 (containment chains), S4 (aggregation counts), and S5 (global
superlatives). Every item is generated deterministically from the CGMES export with a
machine-checkable gold specification and unique-name filtering, yielding 129 extended items
in total.

\subsection{Scoring and statistics}
The primary scorer is deterministic, driven by a typed gold specification per item: a numeric answer
matches within a relative tolerance of 1\% of the gold value; an entity answer requires the gold
identifier to be \emph{asserted} as the answer, not merely to appear somewhere in the response, so
a reply that restates the options and then asserts a different entity scores incorrect, while a
reply that asserts the gold after a copula scores correct, and matching is insensitive to case,
separators and the class prefix the item banks store; an entity-set answer requires every gold entity; and a
count answer the exact integer. Abstentions are scored incorrect for every arm, and the scorer is
released with the artifact. Two LLM judges corroborate it. The first is gpt-oss-120b, which is the
same model used as the open-weight extractor in some evaluated arms and therefore provides only
partial independence; it scores all rows of the decisive comparisons. The second is Claude Haiku
4.5~\citep{anthropic2025_haiku45},
a genuinely cross-family judge, which gives a secondary opinion. Agreement is reported per judge
rather than pooled, and at both the per-arm and the per-stratum level, because the two levels give
different ranges and a single range would obscure which is quoted. Measured against the frozen
deterministic scorer, gpt-oss-120b agrees at 0.97 to 1.00 per arm and 0.94 to 1.00 per stratum,
pooling to 0.987 over 2,050 judged rows. Claude Haiku agrees at 0.98 to 1.00 per arm on SmallGrid
and 0.96 to 0.98 on SmallGrid-3.0, spanning 0.92 to 1.00 per stratum, pooling to 0.983 over 650
judged rows. The weakest single stratum is 0.92, on the SmallGrid-3.0 multi-hop items, and is
reported rather than absorbed into a pooled range (Supplementary
Material, Section~S7).

The statistical plan follows the pre-registration, and we are explicit about which analyses it
covers because the distinction matters for how the evidence should be read.

\emph{Primary, confirmatory.} The pre-registration names a single primary test: a pooled McNemar
exact test~\citep{mcnemar1947} of naive against seed-anchored on a freshly generated item bank spanning the two
budget-binding encodings, with no multiplicity correction because there is one test. The fresh
bank was not used during development and was generated after the pre-registration was frozen; it
uses the already-known two budget-binding CGMES encodings, so it controls item-level exposure
rather than network-level exposure. A pre-fixed sensitivity condition required
the direction to hold in both encodings and both strata.

\emph{Secondary and exploratory.} The six per-network McNemar tests on the original bank are
\emph{exploratory}. They were computed after the original bank had been used throughout method
development, so they are post hoc with respect to that bank, and we report them Holm-corrected~\citep{holm1979} and
labelled as such rather than as a confirmatory family. Because the six networks are three base
topologies re-encoded in two CGMES versions, treating them as six independent tests also
understates dependence; we therefore additionally report a clustered analysis at the level of the
three independent base topologies.

We report Wald 95\% intervals on the exploratory and RealGrid contrasts, where an interval carries
interpretive weight. The primary endpoint's interval is the McNemar-Wald interval on the paired
difference; the secondary paired contrasts against the reimplemented PPR and $G_{\mathrm{LLM}}$
arms use the Newcombe method-10 paired interval~\citep{newcombe1998}, whose endpoints differ
slightly from the score-based variants of the same family. As an a priori sensitivity anchor we also report a minimum detectable effect.
For a paired McNemar test the MDE is $(z_{\alpha/2}+z_{1-\beta})\sqrt{d/n}$, where $d$ is the
discordant-pair proportion, so the value depends on an assumption about $d$ that must be stated.
Under the conservative half-discordant model $d{=}0.5$ the MDE is $0.280$ at $n{=}50$,
$\alpha{=}0.05$, and power 0.80. The pre-registration recorded $0.198$, which corresponds to
$d{=}0.25$; the discordance actually observed across the six networks has mean $0.273$, giving
$0.207$. We quote the conservative $0.280$ as a design-sensitivity benchmark, not as a second
significance threshold: statistical significance is decided by the test, and an effect can be
significant while falling below the conservative benchmark.

The reader is Qwen2.5-7B-Instruct in bfloat16 with greedy decoding under the 8{,}000-character cap,
run on a single NVIDIA GB10 Grace Blackwell superchip (DGX Spark, 128 GB unified memory, CUDA 13.0,
PyTorch 2.12). Because decoding is greedy the runs are reproducible, and the arm tables reproduced
identically on re-execution.

\section{Results on ENTSO-E CGMES test configurations}
\label{sec:results}

We present the evidence from the retrieval-level mechanism outward: the pre-registered accuracy
contrast against naive rendering, the bounded comparison against LLM-extracted GraphRAG (de-confounding
graph source from render order) and its extractor sensitivity, the overflow flip, a corpus
consistency check, and a model-derived human-phrased stress test, with further robustness and
scope studies in the Supplementary Material. Numbers are from the deterministic scorer unless noted;
the two LLM judges (gpt-oss-120b and Claude Haiku) are zero-shot with no exposure to these networks
and only corroborate it. Because the questions are template-generated from the same corpus
(Section~\ref{sec:method}), single-hop and topology accuracies are budget-bounded retrieval upper
bounds, and the 100-question human-phrased stress test (Section~\ref{sec:results-human}, method
$0.83$) is the primary non-self-referential evidence, to be read before the template-generated
accuracies below.

\subsection{Mechanism: answer presence under the cap}
\label{sec:results-mech}
Before any accuracy is measured, a reader-free audit on the development SmallGrid multi-hop stratum
exposes the mechanism. Naive rendering truncated at the budget leaves the answer-bearing units in
context on 3 of 25 items, all 25 contexts at the cap; seed-anchored rendering keeps the answer
present on all 25 with a single context at the cap. The pattern holds across the run set
(Fig.~\ref{fig:caphits}): naive contexts hit the cap on 32 to 50 of 50 items per network (47 to 50 on
five of six), anchored on at most 2 of 50.

Three reader-free audits support the mechanism claim and are referred to throughout by short
labels: E1 is a deterministic structured-query baseline over the parsed model, E2 is the
evidence-scoped answer-presence audit, and E3 is the precondition audit that checks
$D(S)\le B$ per question. E2 is reported here; E1 and E3 are in Supplementary Material,
Section~S3.

The E2 audit tests the theorem's empirical prediction before reader
accuracy is measured. On the two budget-binding networks, naive rendering places the
evidence-linked gold content in context for $0.56$ of questions on SmallGrid and $0.50$ on
SmallGrid-3.0, and seed-anchored rendering for $1.00$ on both; where the budget does not bind,
both reach $1.00$. These rates are pooled over the $n{=}50$ pre-registered S1+S2 items per
network and are not multi-hop-only figures; the per-stratum split is given below and is what the
mechanism argument rests on. Under the weaker substring test, which counts an equal token anywhere in the
context, naive reaches $0.70$ on both networks. E2 is measured against each item's
evidence path: numeric attributes are matched only in the evidence-resolved field of the emitted
seed-description unit (relative tolerance $10^{-4}$), entity answers only in evidence-resolved
entity or edge units, and counts from the complete evidence-resolved member set; an equal token
elsewhere in the context does not count. A precondition audit confirms that the seed-local render
mass $D(S)$ (median 592 to 1{,}931 characters, maximum 3{,}063) satisfies $D(S)\le B$ at
$B=8{,}000$ for every question on every network, so the observed anchored value of $1.00$ is
consistent with Theorem~\ref{thm:preservation}. Hop-ordered ego expansion, degree-aware hub-pruned
BFS, and personalized PageRank have been rerun under this same evidence-scoped test, and we report
them here. On SmallGrid and SmallGrid-3.0 respectively, over the pooled $n{=}50$ items, ego
expansion reaches $1.00$ and $1.00$, hub-pruned BFS $1.00$ and $1.00$, and personalized
PageRank~\citep{page1999pagerank} $0.56$ and $0.50$, against $1.00$ and
$1.00$ for seed-anchored. Split by stratum, ego expansion, hub-pruned BFS and seed-anchored all
reach $1.00$ on S1 and on S2, whereas personalized PageRank matches naive exactly, $1.00$ on S1
and $0.12$ and $0.00$ on S2, so its pooled parity with naive is a multi-hop failure and not an
even shortfall across strata. Seed-anchored rendering is therefore not uniquely better than simple
locality priors at getting the answer into the window: ego expansion and hub-pruned BFS match it
on both budget-binding networks. What distinguishes seed-anchored rendering is not presence
superiority but the checkable a priori guarantee and the fact that the method introduces no additional method-specific tuned or learned parameters beyond the shared hop bound and context budget: ego expansion and
hub-pruned BFS reach the same presence empirically, with a tuned radius and a tuned degree
threshold respectively and with no precondition that can be checked before the render is
performed.

Presence is necessary, not sufficient, and separating the two bounds the claim. The E2 rates quoted
above are pooled over the $n{=}50$ S1+S2 items per network; the multi-hop stratum must be read
separately. Split by stratum, naive rendering places the evidence-linked gold content in context for
every S1 item on both budget-binding networks ($25/25$), and for $0.12$ of S2 items on SmallGrid
($3/25$) and $0.00$ on SmallGrid-3.0 ($0/25$); seed-anchored reaches $1.00$ on both strata and both
networks. An item-level join of E2 presence against frozen-scorer reader correctness shows the two
sets coincide exactly on S2: on SmallGrid the reader is correct on all three items where the
evidence survived and on none of the twenty-two where it did not, and on SmallGrid-3.0 no S2 item
retains its evidence and none is answered correctly. On this stratum the naive failure is therefore
a truncation failure and not a reader failure: where the evidence was in the window the reader used
it every time. The reader-side residual sits on S1 instead, where presence is $1.00$ but accuracy is
$0.92$ on SmallGrid and $0.84$ on SmallGrid-3.0, so two and four items respectively are answered
incorrectly from evidence that was present. We therefore report the multi-hop failure as
retrieval-only, and the single-hop residual as reader-side and outside what the theorem speaks to.
The E1 structured-query analysis and E3 precondition audit are in Supplementary Material,
Section~S3.

\begin{figure}[t]
\centering
\includegraphics[width=.90\linewidth]{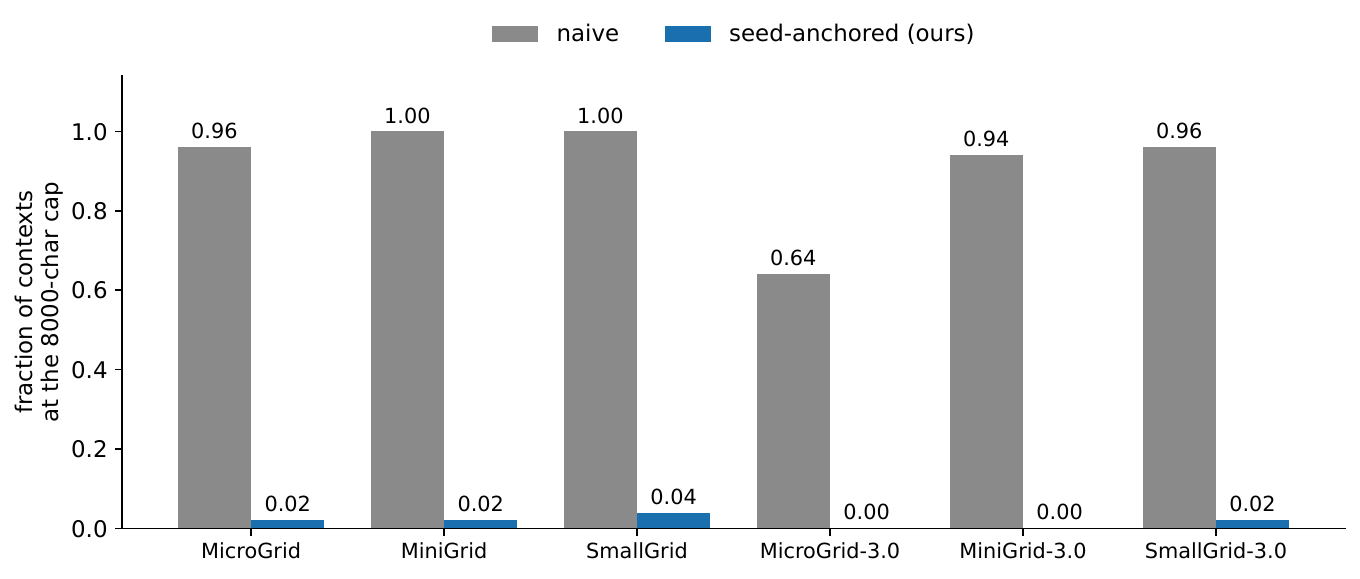}
\caption{Fraction of contexts truncated at the 8{,}000-character budget for the ENTSO-E CGMES test
networks under naive vs seed-anchored retrieval, per network (S1+S2). Bar values are the cap-hit
counts of Table~\ref{tab:main} expressed as fractions of $n{=}50$.}
\label{fig:caphits}
\end{figure}

\subsection{Seed-anchored retrieval improves CGMES power-system QA under context budgets}
\label{sec:results-main}
\paragraph{The pre-registered primary endpoint}
The confirmatory result is the pooled test on the fresh item bank, generated after the
pre-registration was frozen and with the original bank excluded. Across the two budget-binding
encodings ($n{=}100$), naive rendering scores $0.450$ and seed-anchored $0.970$, a paired
difference of $+0.520$ (95\% CI $[+0.415,+0.625]$; discordant pairs $b{=}2$, $c{=}54$; one-sided
McNemar exact $p{=}2.2\times10^{-14}$). Its scope should be read precisely: the fresh bank is fresh
at the item level and is drawn on the two SmallGrid encodings, which are the development topology
family, so it controls template and item exposure but not network-level exposure. Evidence on a
topology never seen during development is the RealGrid replication below, which is
deterministic-arms-only. A pre-fixed comparability check found the worst
fresh-versus-original accuracy gap to be 8.0 percentage points, inside the 10-point threshold
declared in advance, so the confirmatory claim carries no comparability caveat.
Table~\ref{tab:freshbank} gives the endpoint by encoding and stratum. Anchored exceeded naive in
each of the four encoding-by-stratum cells, which is the pre-fixed sensitivity condition. The
cells demonstrate direction and robustness; the primary test is the pooled one, and no cell-level
$p$-value is reported because none was pre-specified.

\begin{table}[t]\centering\small
\caption{The pre-registered confirmatory endpoint, by encoding and stratum. The pooled row is the
primary test. Two readings should be kept in view: the endpoint is specific to one 7B reader at
the 8{,}000-character budget, and gold strings occur verbatim in the corpus for $0.50$ of
SmallGrid items and $1.00$ of SmallGrid-3.0 items, so on the latter the measurement is closer to
in-context retrievability than to question answering.}
\label{tab:freshbank}
\begin{tabular}{llrccc}\hline
Encoding & Stratum & $n$ & naive & seed-anchored & paired difference \\ \hline
SmallGrid 2.4.15 & S1 & 25 & 0.80 & 0.96 & $+0.16$ \\
SmallGrid 2.4.15 & S2 & 25 & 0.08 & 1.00 & $+0.92$ \\
SmallGrid-3.0 & S1 & 25 & 0.92 & 0.96 & $+0.04$ \\
SmallGrid-3.0 & S2 & 25 & 0.00 & 0.96 & $+0.96$ \\ \hline
Pooled & both & 100 & 0.450 & 0.970 & $+0.520$ \\
\hline\end{tabular}\end{table}

\paragraph{Exploratory per-network contrasts on the original bank}
Table~\ref{tab:main} and Fig.~\ref{fig:main} report the six per-network contrasts. These are
exploratory: the original bank was used throughout method development, so they are post hoc with
respect to it. The point estimate
favours seed-anchored rendering on every one of the six networks. After Holm correction, the gain
is significant on three networks: MicroGrid rises from $0.68$ to $0.92$
($p{=}7.3\times10^{-3}$), SmallGrid from $0.52$ to $0.98$
($p{=}1.2\times10^{-6}$), and held-out SmallGrid-3.0 from $0.42$ to $0.96$
($p{=}8.9\times10^{-8}$), with respective gains of $+0.24$, $+0.46$ and $+0.54$. The conservative
design-sensitivity benchmark is $0.280$; SmallGrid and SmallGrid-3.0 exceed it and MicroGrid does
not, which bears on how confidently the MicroGrid effect size should be read but does not change
its significance. Only SmallGrid and SmallGrid-3.0 are
budget-binding; MicroGrid is significant but not budget-binding.

Because the six networks are three base topologies re-encoded in two CGMES versions, we also
report the analysis clustered at the level of the three independent base topologies, which
removes the pseudo-replication the six-test framing invites. Two of three are significant after
Holm correction over three tests: MicroGrid $+0.17$ ($p{=}4.4\times10^{-4}$) and SmallGrid
$+0.50$ ($p{=}5.3\times10^{-15}$); MiniGrid $+0.07$ is not significant ($p{=}0.065$). The
defensible summary of the exploratory analysis is therefore two of three independent base
topologies, not three of six networks. On the two budget-binding
networks, SmallGrid multi-hop S2 accuracy rises from $0.12$ to $0.96$, while on
SmallGrid-3.0 naive rendering answers no multi-hop item (S2 accuracy $0.00$) and anchored restores
S2 to $0.92$. Three comparisons remain: MiniGrid ($+0.04$, $p{=}6.3\times10^{-1}$),
MicroGrid-3.0 ($+0.10$, $p{=}3.8\times10^{-1}$), and MiniGrid-3.0
($+0.10$, $p{=}3.8\times10^{-1}$). All three fall below the MDE and are underpowered, so we treat
them as inconclusive rather than as wins. No network shows a significant negative difference. For the
three remaining networks, end-to-end differences are inconclusive; the theorem establishes only
retrieval-level answer presence under its preconditions and does not establish accuracy
non-inferiority. These six per-network McNemar tests are exploratory, as set out in
Section~\ref{sec:method}; the confirmatory result is the fresh-bank endpoint reported above.

\begin{table}[t]\centering\small
\caption{Seed-anchored vs naive retrieval on six ENTSO-E CGMES test networks (S1+S2, $n{=}50$ per
network, corrected unit-aware deterministic scoring). Items are template-generated CGMES-based QA.
The naive arm here is $G_{\mathrm{HYB}}$ under dump-then-truncate rendering, not
$G_{\mathrm{STD}}$+naive, which is reported separately in Table~\ref{tab:arms}.
Cap hits: contexts reaching the 8000-character budget. $p$: McNemar exact, Holm-corrected across
the six networks. The CGMES 3.0 networks are held out from retrieval-method, seeding, hop-bound,
threshold and scorer development; parser compatibility fixes for nullable 3.0 fields are disclosed
in Section~\ref{sec:method}. These six contrasts are exploratory, not the confirmatory endpoint.}
\label{tab:main}
\resizebox{\linewidth}{!}{\begin{tabular}{lccccc}\hline
Network & naive & anchored & cap hits (naive) & cap hits (anch.) & Holm $p$ \\ \hline
MicroGrid & 0.68 & 0.92 & 48/50 & 1/50 & 7.3e-03 \\
MiniGrid & 0.88 & 0.92 & 50/50 & 1/50 & 6.3e-01 \\
SmallGrid & 0.52 & 0.98 & 50/50 & 2/50 & 1.2e-06 \\
MicroGrid-3.0 & 0.88 & 0.98 & 32/50 & 0/50 & 3.8e-01 \\
MiniGrid-3.0 & 0.86 & 0.96 & 47/50 & 0/50 & 3.8e-01 \\
SmallGrid-3.0 & 0.42 & 0.96 & 48/50 & 1/50 & 8.9e-08 \\
\hline\end{tabular}}\end{table}

\paragraph{An independent topology, scoped to attribute lookup}
RealGrid is an ENTSO-E CGMES 2.4.15 conformity configuration of $72{,}418$ objects on which naive
rendering saturates the $8{,}000$-character budget for every question. Its multi-hop stratum has
been \emph{retired} and is not reported. A gold-uniqueness audit found that none of its 25 S2 golds
identifies a unique referent: the topological-node names used as gold labels are shared by up to
$5{,}488$ nodes, and the renderer emits names rather than machine identifiers, so no reader can
assert a unique answer. The stratum does not measure what it claims to measure, so it is removed
rather than narrowed, together with two subgraph-sample comparisons that rested on it.

What survives is the attribute-lookup stratum, re-derived under the frozen scorer. On S1
($n{=}25$) all three deterministic arms reach $1.000$: $G_{\mathrm{STD}}$+naive,
$G_{\mathrm{HYB}}$+naive, and seed-anchored alike. RealGrid therefore supports a \emph{no-harm}
result on an independent topology, consistent with
Proposition~\ref{prop:noharm}, and not a dominance result. LLM-extracted graphs were not rebuilt at
this scale, estimated at ${\sim}34$M tokens for LightRAG alone.

All standards-native arms use the deterministic CGMES parser and
spend zero LLM tokens on graph construction, whereas every arm built by LightRAG, Microsoft
GraphRAG, or HippoRAG pays an LLM extraction cost, so \texttt{G\_HYB+ANCHORED} is a deterministic
retriever rather than an LLM-enhanced one.

\begin{figure}[t]
\centering
\includegraphics[width=.90\linewidth]{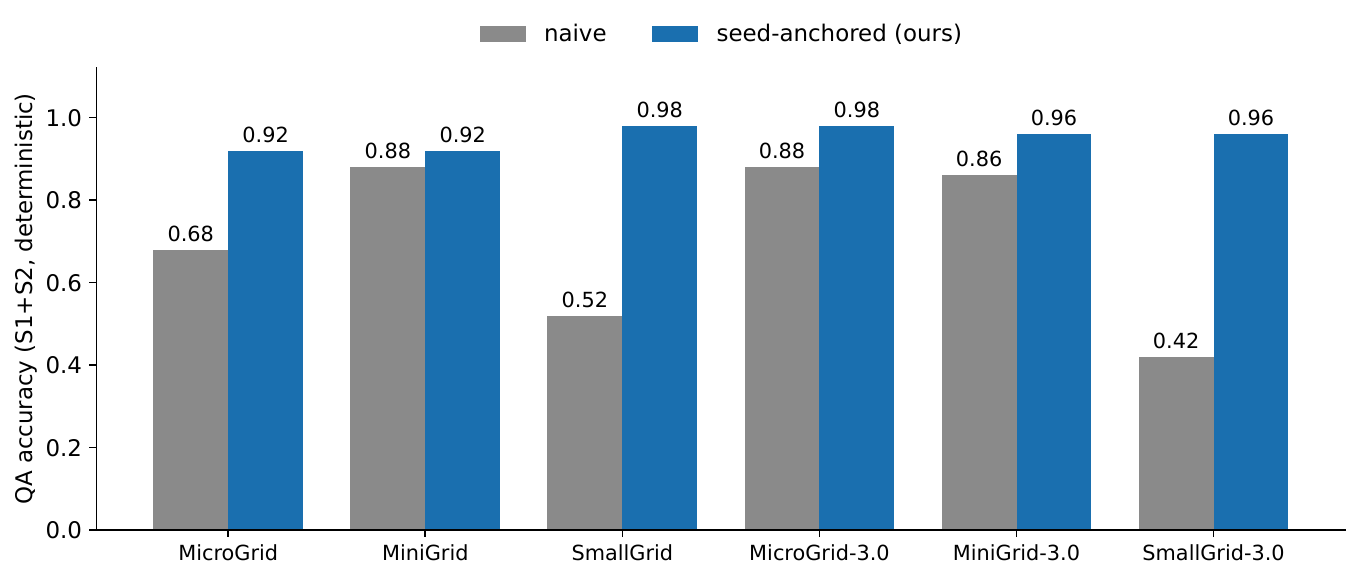}
\caption{Seed-anchored vs naive retrieval, overall S1+S2 accuracy per network
(corrected unit-aware deterministic scoring, $n{=}50$ each). Bar values are those of
Table~\ref{tab:main}. CGMES 3.0 networks are held out.}
\label{fig:main}
\end{figure}

\subsection{Against LLM-extracted GraphRAG}
\label{sec:results-gllm}
Table~\ref{tab:arms} and Fig.~\ref{fig:tokens} set seed-anchored retrieval against LLM-extracted
GraphRAG from three frameworks spanning three extraction paradigms (LightRAG keyword-relations,
Microsoft GraphRAG entity-descriptions, and HippoRAG OpenIE triples), taking LightRAG first. On the
development SmallGrid the two arms land in exact statistical parity ($0.98$ vs $0.98$; paired
difference $0.000$, 95\% CI $\pm 0.055$), yet the extracted graph cost 594{,}617 tokens to build and
the anchored arm none. On the held-out SmallGrid-3.0 the open-weight gpt-oss-120b extraction
(754{,}781 tokens) reaches $0.74$ against $0.96$ for anchored ($+0.22$, 95\% CI $[+0.08,+0.36]$,
nominal $p{=}7.4\times10^{-3}$, secondary exploratory); on MicroGrid anchored again leads ($0.92$ vs
$0.78$).

Reading the LLM-extracted graph only under the naive order confounds graph source with render order,
so we added \texttt{G\_LLM+ANCHORED}: the seed-anchored order on the LLM-extracted graph, identical
reader and scorer, on the three networks with a $G_{\mathrm{LLM}}$ graph (Table~\ref{tab:arms}). Two
findings follow. Anchored ordering is a general lever, improving the extracted graph too by $+0.06$ on
held-out SmallGrid-3.0 ($0.74$ to $0.80$) and $+0.04$ on MicroGrid ($0.78$ to $0.82$). The lever is
not uniform across frameworks: on MicroGrid, which does not bind the budget, Microsoft GraphRAG
moves from $0.72$ to $0.70$ and HippoRAG from $0.80$ to $0.78$ under the same order
(Table~\ref{tab:arms}), so the ordering gain is non-negative for all three extraction frameworks
only on the two budget-binding networks. Yet at fixed
anchored order the standards-native graph still leads ($0.96$ vs $0.80$ held-out; $0.92$ vs $0.82$
MicroGrid), a graph-source and fidelity effect de-confounded from render order. On development
SmallGrid the seed-anchored standards-native arm and both LightRAG arms, naive and anchored, all
reach $0.98$, so both levers wash out there; Microsoft GraphRAG ($0.40$/$0.46$) and HippoRAG
($0.80$/$0.86$) do not, and the source effect appears only once the network binds the budget.

\textbf{A second extraction framework.} To test whether the against-\linebreak
extraction result is specific to
LightRAG, we ran Microsoft GraphRAG (graphrag 3.1.0) under the identical pipeline (same open-weight
extractor gpt-oss-120b, matched chunking and gleaning, CGMES-appropriate entity types, same reader,
seeder, and scorer), adding its two arms to Table~\ref{tab:arms}. On both budget-binding networks it
scored below LightRAG and far below the zero-token anchored method (SmallGrid $0.40$ naive, $0.46$
anchored; SmallGrid-3.0 $0.60$, $0.70$; against $0.98$ and $0.96$ for anchored), at $2.3\times$ to
$2.7\times$ LightRAG's build cost (235{,}312 to 1{,}703{,}707 tokens). The mechanism is an
inconsistent, partly-merged entity set in which a queried component folded into a generic node has no
discrete entity to anchor: S2 connectivity survives on the node-breaker 3.0 network ($0.96$) while S1
per-component attributes do not ($0.24$), the fidelity loss extraction incurs elsewhere; LightRAG
under the identical seeder scored $0.98$ on SmallGrid, so this is a property of the extraction, not
the seeder. The full entity-merge walkthrough is in Supplementary Material, Section~S7.

\textbf{A third extraction paradigm.} HippoRAG (OpenIE triples, the cheapest of the three at
31{,}890 to 206{,}254 tokens) confirms the render-order mechanism on a third, independently extracted
graph: on held-out SmallGrid-3.0 its OpenIE graph carries a flooding hub under the naive order
($0.28$) that seed-anchored rescues to $0.78$ ($+0.50$), the no-harm ordering effect of
Theorem~\ref{thm:preservation} and Proposition~\ref{prop:noharm}, while the zero-token standards-native
arm still leads ($0.96$ vs $0.78$).

\paragraph{A framework's own retrieval scheme, reimplemented}
Reading an extracted graph under our shared renderer could understate a framework that ships its
own retriever, so we reimplemented HippoRAG's personalized-PageRank retrieval over its own
extracted graph. The packaged implementation could not be run on the arm64 evaluation host, so
this is a reimplementation and we label it as such. On SmallGrid it reaches $0.86$ against $0.98$
for seed-anchored (paired difference $+0.12$, 95\% CI $[+0.031,+0.237]$, McNemar exact
$p{=}0.031$); on held-out SmallGrid-3.0 it reaches $0.78$ against $0.96$ ($+0.18$, 95\% CI
$[+0.041,+0.319]$, $p{=}0.023$). Native PPR recovers exactly what HippoRAG's graph reaches under
our shared anchored renderer ($0.86$ and $0.78$ on the two networks), which is direct evidence
that the shared-renderer comparison is not disadvantaging the framework: the gap is in the graph,
not in our rendering of it. The two remaining frameworks' packaged retrievers were not run.

The reading is deliberately bounded: on every network, at zero LLM graph-construction tokens the
standards-native \emph{seed-anchored} arm is never behind any extracted graph under either render
order. The claim is specific to that arm. The standards-native arms read under the naive order are
behind on the budget-binding networks, which is the failure this paper is about:
$G_{\mathrm{HYB}}$ under naive rendering reaches $0.52$ on SmallGrid against LightRAG's $0.98$,
and $0.42$ on SmallGrid-3.0 against $0.74$ (Table~\ref{tab:arms}). So on this
CIM/CGMES benchmark, conditional on the three tested frameworks, extractor, reader,
and budget, LLM extraction buys no retrieval advantage. The numbers are doubly conditional: the
development-network tie depended on a frontier proprietary extractor (swapping it drops
$G_{\mathrm{LLM}}$ to $0.66$, Section~\ref{sec:results-extractor}) and absolute accuracies are
reader-dependent (Section~S4); the robust claim is the ordering under fixed cost, not the level.

\begin{table}[t]\centering\small
\caption{Full arm comparison (overall S1+S2 accuracy, deterministic retrieval; shared Qwen2.5-7B
reader, seeder, and scorer), grouped by graph source across three extraction frameworks. Standards-native arms cost zero LLM
graph-construction tokens; the three extraction frameworks pay the per-network build cost in each panel
heading (token triples list MicroGrid / SmallGrid / SmallGrid-3.0; LightRAG and HippoRAG counts are
taken from their own run logs, the Microsoft GraphRAG counts from its indexing-engine log, which is
not part of the deposited artifact; Microsoft GraphRAG costs 2.3x to
2.7x LightRAG, HippoRAG is the cheapest). All three extracted graphs are read under the identical shared render pipeline, isolating
graph source under matched retrieval; the anchored rows apply the seed-anchored order to the
extracted graph, and with that order fixed the standards-native $G_{\mathrm{HYB}}$ still leads on the
budget-binding networks. SmallGrid-3.0 is held out; its extracted graphs used an open-weight
extractor (gpt-oss-120b). Cells are reported under the corrected unit-aware
deterministic scorer, the same frozen scoring rule used in Table~\ref{tab:main}, with one
exception: the MicroGrid $G_{\mathrm{STD}}$+naive and $G_{\mathrm{LLM}}$+naive entries are stored
summary values for which no row-level predictions were retained, so they could not be re-derived
under the frozen scorer and are carried at their stored-scorer values.}
\label{tab:arms}
\begin{tabular}{lccc}\hline
Arm & MicroGrid & SmallGrid & SmallGrid-3.0 (held-out) \\ \hline
\multicolumn{4}{l}{\emph{Standards-native (zero graph-construction tokens)}} \\
$G_{\mathrm{STD}}$, naive (raw CIM) & 0.84 & 0.48 & 0.94 \\
$G_{\mathrm{HYB}}$, naive (support) & 0.68 & 0.52 & 0.42 \\
$G_{\mathrm{HYB}}$, anchored (ours) & \textbf{0.92} & \textbf{0.98} & \textbf{0.96} \\
\multicolumn{4}{l}{\emph{LightRAG extraction (87{,}962 / 594{,}617 / 754{,}781 tokens)}} \\
$G_{\mathrm{LLM}}$, naive & 0.78 & 0.98 & 0.74 \\
$G_{\mathrm{LLM}}$, anchored & 0.82 & 0.98 & 0.80 \\
\multicolumn{4}{l}{\emph{Microsoft GraphRAG extraction (235{,}312 / 1{,}492{,}123 / 1{,}703{,}707 tokens)}} \\
MS GraphRAG, naive & 0.72 & 0.40 & 0.60 \\
MS GraphRAG, anchored & 0.70 & 0.46 & 0.70 \\
\multicolumn{4}{l}{\emph{HippoRAG (OpenIE) extraction (31{,}890 / 200{,}785 / 206{,}254 tokens)}} \\
HippoRAG, naive & 0.80 & 0.80 & 0.28 \\
HippoRAG, anchored & 0.78 & 0.86 & 0.78 \\
\hline\end{tabular}\end{table}

\begin{figure}[t]
\centering
\includegraphics[width=.80\linewidth]{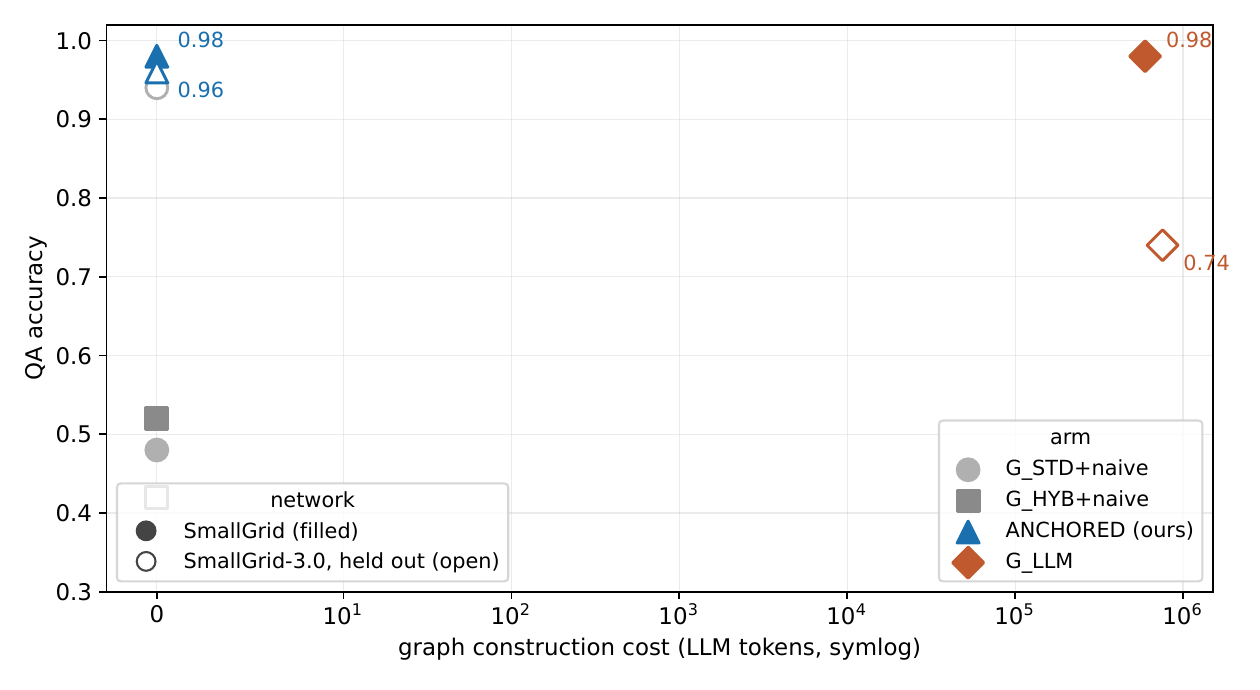}
\caption{Accuracy vs graph-construction cost on the two budget-binding networks. All
deterministic arms sit at zero tokens; LLM extraction pays $5.9\times10^5$ to
$7.5\times10^5$ tokens for parity at best.}
\label{fig:tokens}
\end{figure}

\subsection{Extractor-controlled comparison}
\label{sec:results-extractor}
The development and held-out $G_{\mathrm{LLM}}$ graphs used different extractors (proprietary on
development, open-weight gpt-oss-120b on held-out), confounding the held-out advantage with the
extractor. Rebuilding the development $G_{\mathrm{LLM}}$ on SmallGrid with the held-out extractor,
varying only that (Table~\ref{tab:extractor}, single reader, scorer, seeder), drops $G_{\mathrm{LLM}}$
from $0.98$ to $0.66$ (a 691{,}089-token extraction, near the held-out gpt-oss level of $0.74$) while
the zero-token anchored arm stays $0.98$. The held-out margin is thus extractor-conditional (the
development-network parity required a frontier proprietary extractor at 594{,}617 tokens), and the
de-confounding also puts seed-anchored $+0.32$ ahead on the development network, so that result is not
a network or version artifact.

\begin{table}[t]\centering\small
\caption{Extractor-controlled comparison on the development SmallGrid (S1+S2, $n{=}50$,
one reader/scorer/seeder). The two $G_{\mathrm{LLM}}$ rows differ only in the extractor
that built the graph. Holding the extractor at the same open-weight model used for the
held-out network drops $G_{\mathrm{LLM}}$ to the held-out level, while the zero-token
anchored arm is unchanged: the held-out margin is extractor-conditional, not a
network artifact.}
\label{tab:extractor}
\begin{tabular}{llcc}\hline
Arm & Graph extractor & build tokens & accuracy \\ \hline
anchored (ours) & none (parsed CGMES) & 0 & \textbf{0.98} \\
$G_{\mathrm{LLM}}$ & Azure gpt-5.1~\citep{openai2025_gpt51} (proprietary) & 594,617 & 0.98 \\
$G_{\mathrm{LLM}}$ & gpt-oss-120b (open-weight) & 691,089 & 0.66 \\
\hline\end{tabular}\end{table}

\subsection{Budget overflow is representation-dependent}
\label{sec:results-flip}

\begin{figure}[t]
\centering
\includegraphics[width=0.92\linewidth]{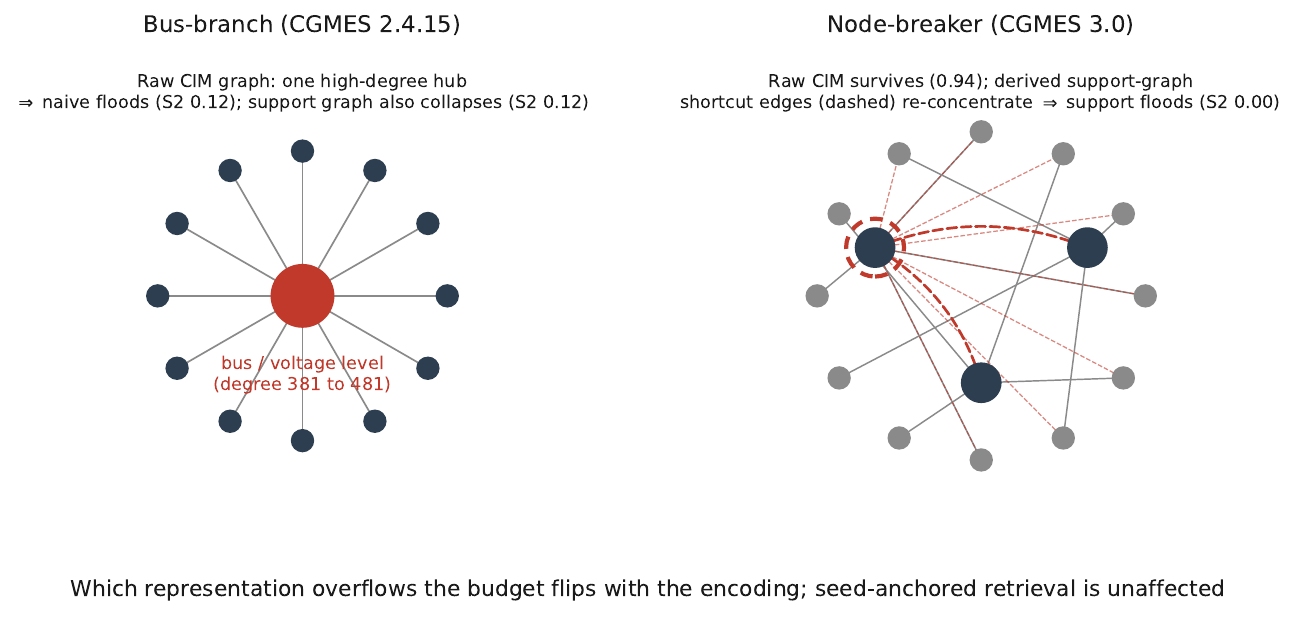}
\caption{Why budget overflow flips with the CGMES encoding. In the bus-branch encoding
(2.4.15) equipment attach directly to a high-degree bus, so the raw CIM graph carries the
flooding hub and naive rendering loses the answer edges; the derived support graph is comparatively
less affected (0.52 vs 0.48 overall) but collapses under naive rendering too (S2 0.12), so both
graphs fail here. In
the node-breaker encoding (3.0) equipment attach through breakers and connectivity nodes, which
spreads raw-graph degree and lets the raw CIM graph survive (0.94), but the support graph's derived
shortcut edges (dashed) re-concentrate degree and flood instead. Seed-anchored rendering renders
seed-incident edges first regardless of representation and is unaffected in either encoding.}
\label{fig:flip}
\end{figure}

Table~\ref{tab:arms} hides a structural surprise. On the bus-branch SmallGrid the raw CIM graph
$G_{\mathrm{STD}}$ collapses under naive rendering (0.48), while on the node-breaker SmallGrid-3.0
that same raw graph survives (0.94) and instead the derived support graph $G_{\mathrm{HYB}}$ collapses
(0.42, S2 0.00), because the node-breaker encoding spreads two-hop neighbourhoods more thinly whereas
the support graph's derived shortcut edges concentrate them (Figure~\ref{fig:flip}). Which
representation overflows the budget is thus an artifact of the encoding, and the failing arm flips
unpredictably between versions of the same network. Seed-anchored retrieval is the only arm uniformly
high across every cell, the operational content of the no-harm guarantee: a deployer need not predict
which graph-encoding combination will overflow.

\subsection{Corpus baselines: an internal consistency check}
\label{sec:results-corpus}
A third family of arms moves from graphs to the deterministic \emph{text} rendering of the
standard. Flat lexical RAG is strong on S1/S2 everywhere (0.86 to 0.98, including 0.98 on
SmallGrid); vector RAG degrades on the large networks (0.58 and 0.68, MiniLM~\citep{wang2020_minilm} retrieval failing at
scale); long-context collapses entirely there (0.00: a corpus of roughly 85{,}000 characters
truncated at 8{,}000 loses most answers, the textual analogue of Theorem~\ref{thm:separation}). Two
caveats keep the flat-RAG result honest: 249 of the 300 S1/S2 gold answers (83\%) appear verbatim,
measured as exact occurrence of the \texttt{gold\_answer} string in the network's
\texttt{corpus.txt}, so
lexical matching enjoys a surface-form advantage that will not transfer to heterogeneous field
documentation, and the corpus is itself a deterministic standards-native artifact. We therefore read
flat-RAG not as evidence of generalization but as a one-directional consistency check: under the
conditions stated above, a faithful standards-native representation with budget-aware retrieval
matches or exceeds the extracted graphs in either graph or text form (generalization beyond the
corpus is tested in Supplementary Material, Section~S4).

The per-network corpus-baseline accuracies are tabulated in Supplementary Material, Section~S3.

\subsection{A model-derived, human-phrased stress test}
\label{sec:results-human}
To probe outside the self-referential regime we use a human-authored set of 100 operator-style
questions on SmallGrid whose wording the templates never generate, spanning eight families: attribute
reads (S1), forward and reverse topology (S2), shared-node intersection (R1), unit conversion (D1),
comparison (D2), two-fact arithmetic (D3), a boolean inequality (D4), and thirteen adversarial traps
(false premises, name-node confusions, near-duplicate identifiers, distractors). Every gold answer is
resolved deterministically from the parsed model and all 100 reconcile, so no language model touches
ground truth: the set is model-derived but human-phrased, stressing robustness to non-templated
adversarial wording rather than serving as an externally annotated benchmark (field-authored gold is
future work). Table~\ref{tab:human} gives accuracy by family.

The result confirms the thesis on non-templated questions and marks its boundary honestly. The
proposed method ($G_{\mathrm{HYB}}$ with seed-anchored rendering) leads overall at $0.83$, against
$0.69$ for the LLM-extracted arm and $0.42$ for naive rendering, and the gaps fall where the theory
predicts. All probe values quoted here are frozen-scorer values and are read directly from
Table~\ref{tab:human}. Seed-anchored rendering reaches higher accuracy on topology ($0.89$ vs $0.11$ naive) and shared-node
intersection ($1.00$ vs $0.25$), where two lines' neighbourhoods share the budget and naive
rendering truncates one. The LLM-extracted arm, strong on single-attribute reads ($0.96$) and unit
conversion, loses the exact topology and values needed for intersection ($0.00$), comparison
($0.22$), and two-fact arithmetic ($0.29$), reinforcing that a reconstruction is not a substitute
for the parsed model. On the thirteen adversarial traps the method holds $0.77$ against $0.31$ for
naive, so the ordering advantage is not an artifact of clean phrasing. The probe is paired at
$n{=}100$, so we test it. Exact two-sided McNemar against the proposed method gives
$p{=}8.2\times10^{-10}$ versus $G_{\mathrm{STD}}$+naive, $p{=}3.6\times10^{-8}$ versus
$G_{\mathrm{STD}}$+anchored, and $p{=}9.4\times10^{-3}$ versus the LLM-extracted arm. Families with
$n\le4$ (R1, D1, D4) are reported descriptively as raw fractions and are not interpreted as stable
family-level estimates. The honest ceiling is
reader-side computation: the arithmetic and boolean families (D3, D4; D4 a single item) bound every
arm, because retrieving both operands is necessary but the small reader must still compute
correctly, so the residual failures sit in reader arithmetic rather than retrieval.

\begin{table}[t]\centering\small
\caption{Human-phrased robustness probe on SmallGrid (100 operator-style questions; wording written by hand and not template-generated; gold resolved deterministically from the parsed CGMES model; local reader). Per-family correctness across the decisive arms, with family sizes given in the row labels. This is a single-network robustness probe, not an independent external benchmark. All cells are reported under the corrected unit-aware, assertion-aware deterministic scorer (frozen; sha256 2d9c8cfa), the same rule used for the six-network main table, and are scored against untruncated reader responses recovered by the same-item regeneration certificate, every row of which was prefix-verified against the stored text. The $G_{\mathrm{STD}}$+ANCHORED column reports the arm the Section~6 paired tests are computed against, on the same basis as the other three columns.}
\label{tab:human}
\resizebox{\linewidth}{!}{\begin{tabular}{lcccc}\hline
Family & G\_HYB+ANCHORED & G\_STD+ANCHORED & G\_LLM+naive & G\_STD+naive \\ \hline
S1: direct attribute ($n{=}25$) & 1.00 & 0.92 & 0.96 & 0.84 \\
S2: topology ($n{=}38$) & 0.89 & 0.18 & 0.79 & 0.11 \\
R1: shared-node intersection ($n{=}4$) & 1.00 & 0.25 & 0.00 & 0.25 \\
D1: unit conversion ($n{=}3$) & 0.67 & 0.67 & 1.00 & 0.67 \\
D2: value comparison ($n{=}9$) & 0.44 & 0.78 & 0.22 & 0.67 \\
D3: two-fact arithmetic ($n{=}7$) & 0.57 & 0.29 & 0.29 & 0.43 \\
D4: boolean inequality ($n{=}1$) & 0.00 & 0.00 & 0.00 & 1.00 \\
ADV: adversarial traps ($n{=}13$) & 0.77 & 0.62 & 0.62 & 0.31 \\
\hline
overall ($n{=}100$) & 0.83 & 0.50 & 0.69 & 0.42 \\
\hline\end{tabular}}\end{table}

\paragraph{Held-out transfer on SmallGrid-3.0}
\label{sec:results-humanv3}
A second 100-question human-authored set on the held-out SmallGrid-3.0 (node-breaker, 5{,}315
objects), run at the published budget and 2-hop retrieval with the same reader and scorer and all
golds reconciled against the parsed model, shows the fidelity claim transfers to an unseen network:
the proposed method leads overall ($0.62$) and the other structural-graph arms answer the
questions ($0.53$ and $0.57$), while the LLM-descriptor arm collapses to
$0.16$, abstaining on 84 of 100
questions, its descriptor graph summarizing rather than preserving exact attributes. Testing this
probe changes how it should be read. Against the LLM-extracted arm the method is decisively ahead
(exact two-sided McNemar $p{=}2.3\times10^{-12}$), and against $G_{\mathrm{STD}}$+anchored it is
marginal ($p{=}0.049$); but against $G_{\mathrm{STD}}$+naive the difference is
\emph{not} significant ($0.62$ vs $0.57$, $b{=}12$, $c{=}7$, $p{=}0.36$). On this held-out,
mostly non-binding probe the method is therefore separable from the LLM-extracted arm but not from
the structural naive arm, which is what the absence of a binding budget predicts. We state the
consequence plainly: no held-out, non-templated comparison under a binding budget exists in this
study. The held-out probe is not budget-binding, and the budget-binding comparisons are templated,
so the two properties we would most want combined are never combined in a single evaluation. The
SmallGrid human-probe margin against the naive arm is accordingly exploratory rather than
confirmatory. This gap
is partly confounded by its different open-weight gpt-oss-120b extractor
(Section~\ref{sec:results-extractor}; Table~\ref{tab:humanv3}, per-family detail in Supplementary
Material, Section~S5). We mark the boundary plainly: here SmallGrid-3.0 is mostly non-budget-binding
(naive context averaging 2262 characters against the 8000-character budget, only 13 of 100 questions
reaching at least 98\% of it), so it does not reproduce the budget-binding dominance margin SmallGrid establishes
(Table~\ref{tab:human}) and we do not present it as a second dominance result. The two networks are
complementary: SmallGrid establishes the anchored-versus-naive dominance margin, and held-out
SmallGrid-3.0 establishes fidelity transfer.

\begin{table}[t]\centering\footnotesize
\caption{Held-out human-authored probe on SmallGrid-3.0 (CGMES 3.0 node-breaker, 5{,}315 objects): 100 human-written questions at the published 8000-character budget and 2-hop retrieval. Overall accuracy and the budget-binding split, where binding means the $G_{\mathrm{STD}}$ naive rendered context reaches at least 98\% of the 8000-character budget (13 of 100). Mostly non-binding here, so it tests fidelity transfer, not the dominance margin. All cells are reported under the corrected unit-aware, assertion-aware deterministic scorer (frozen; sha256 2d9c8cfa), the same rule used for the six-network main table, and are scored against untruncated reader responses recovered by the same-item regeneration certificate, every row of which was prefix-verified against the stored text. No cell is carried over from the pre-correction scorer.}
\label{tab:humanv3}
\resizebox{\linewidth}{!}{\begin{tabular}{lccc}\hline
Arm & Overall (100) & Binding (13) & Non-binding (87) \\ \hline
G\_HYB+ANCHORED (published) & 0.62 & 0.31 & 0.67 \\
G\_STD+naive & 0.57 & 0.15 & 0.63 \\
G\_STD+ANCHORED & 0.53 & 0.15 & 0.59 \\
G\_LLM+naive & 0.16 & 0.15 & 0.16 \\
\hline\end{tabular}}\end{table}

\subsection{Further robustness and scope studies}
\label{sec:results-further}
Further studies, with full tables and diagnostics in the Supplementary Material, probe the result's boundary.
\emph{Reader robustness} (Section~S4): across three readers the level varies (anchored $0.98$, $0.78$,
$0.64$ on SmallGrid) but the ordering holds, anchored exceeding naive in all six reader-by-network
cells and staying at or above the LLM-extracted arm on the held-out network under every reader, the
development-network comparison being the sole reader-sensitive one: under gemma-3-27b the anchored
arm reads $0.78$ against $0.90$ for the LLM-extracted arm on SmallGrid, so that cell is a deficit
rather than parity, while the held-out network keeps anchored ahead under every reader.
\emph{Paraphrase robustness} (Section~S4): under gpt-oss-120b paraphrases altering more than half the
surface tokens while preserving identifiers (Jaccard 0.41 to 0.46), no arm collapses and the ordering
holds.
\emph{Derived answers} (Section~S4): on a 28-item subset whose gold sum is absent from the corpus,
anchored reaches $0.83$ on SmallGrid against $0.25$ naive and $0.07$ pooled for the LLM-extracted
graph under the frozen scorer, so verbatim overlap inflates levels but not ordering.
\emph{Extended strata} (Section~S5): S3 containment chains behave like seed-local questions, while S4
aggregation is largely unsolved on the large networks, with seed-anchored reaching $0.40$ on
SmallGrid and $0.50$ on SmallGrid-3.0, and on those same large networks no retrieval arm answered
an S5 superlative query, which calls for query planning. The S5 result is scoped to the large
networks: on the smaller networks the stratum is not uniformly unsolved, and the anchored arm
answers all three S5 items on MiniGrid-3.0.
\emph{Operational scale, model-free} (Section~S6): on synthetic CGMES networks up to $10^5$ objects,
anchored answer-edge presence stays $1.00$ with no cap hits while naive collapses to $0.00$ past the
hub-flood threshold (Proposition~\ref{prop:quant}); this validates answer-presence, not reader
accuracy.
\emph{Cost and scorer corroboration} (Section~S7): parsing $G_{\mathrm{STD}}$ costs zero LLM tokens
whereas extraction spans 0.03M to 1.7M tokens per network and recurs with each revision, and two
zero-shot cross-family judges corroborate the deterministic scorer (per stratum, gpt-oss-120b
0.94 to 1.00 and Claude Haiku 0.92 to 1.00; pooled 0.987 and 0.983 respectively).

\section{Practical guidance for utilities}
\label{sec:guidance}

For a TSO placing an LLM assistant over its CIM/CGMES network models, our results in the evaluated
regime (networks up to 72K objects, single 7B reader, 8{,}000-character budget) translate into a
small set of actionable recommendations, each following from the measured behaviour of the arms and
aimed at reducing the risk of missing connectivity in tasks such as protection review, reliability
study, or ENTSO-E conformity assessment.

\begin{enumerate}
\item \textbf{Consider parsing the standard rather than reconstructing it.} The seed-anchored
standards-native arm performed at least as well as the extracted graph representations produced by
all three LLM-extracted GraphRAG frameworks on every network, for the tested reader, extractor, and
budget, under the controlled shared-pipeline comparison that isolates graph source from each
framework's native retriever. One such retriever, HippoRAG's personalized PageRank, was
reimplemented and also stayed behind the seed-anchored arm; the packaged native retrievers were
not run. LLM
extraction consumed 0.03M to 1.7M tokens per network, recurs on every model revision, and left the
strongest LLM-extracted arm, LightRAG, at parity with the seed-anchored arm at best and 22 points
behind it at worst, with Microsoft GraphRAG further
behind at 2.3x to 2.7x LightRAG's cost (Table~\ref{tab:arms}).
\item \textbf{Treat the context budget as a primary failure mode, not a tuning detail.} The errors we
observed were overwhelmingly truncation of correct knowledge rather than its absence: naive rendering
exhausted the budget on up to 50 of 50 queries and lost almost every multi-hop answer on the two
large networks (S2 accuracy 0.12 and 0.00). Which representation overflows flips with the encoding
(Section~\ref{sec:results-flip}), so a clean pass on one network is no guarantee for another and each
deployed encoding should be checked under its own budget.
\item \textbf{Consider seed-anchored ordering for seed-local CGMES queries} after verifying
$D(S)\le B$, seeding accuracy, and rendering completeness. It changes rendering order alone, carries
an answer-preservation guarantee, adds no tuned parameters, and introduces no regression risk inside
the guaranteed regime; across all six networks and both CGMES versions it stayed uniformly high.
\item \textbf{Keep the parsed CGMES model as the numeric source of truth.} LLM-extracted graphs
migrate numeric attributes between similarly named components, producing confident but wrong values;
grounding any numeric answer in the parsed model removes that failure class.
\item \textbf{Route aggregation and superlative queries to a query engine, not to more retrieval.}
Counting questions are answered for fewer than half the items on the large networks (seed-anchored
$0.40$ to $0.50$) and no retrieval arm answered a global-maximum question there
(Supplementary Material, Section S5); no reordering of retrieved context closes that gap. These are
single-line queries against the parsed model, for example SPARQL over the CIM RDF per IEC
61970-501~\citep{iec61970_501} or a call into a network-analysis library such as
powsybl~\citep{powsybl} or pandapower~\citep{thurner2018pandapower}, with budget-bounded retrieval
formally guaranteed for seed-local questions, while multi-hop uses remain empirical and should be
separately validated. A deterministic structured-query
baseline achieves perfect accuracy on regular-naming queries and degrades on node-breaker encodings
where entity-name irregularity breaks template matching (Supplementary Material, Section S3). This
division of labour has not been evaluated experimentally as an end-to-end hybrid system.
\end{enumerate}

\textbf{Reader monitoring.} The retrieval-level guarantee holds independently of the reader, but the
absolute accuracy level does not and varied across the readers we tested; teams should monitor the
reader in production and re-validate the accuracy level when it changes, while the ordering and
no-harm guarantee need no re-checking.

\textbf{Deployment and data governance.} The deterministic parser and retriever run entirely on the
operator's local infrastructure with no external API call, yielding reproducible, auditable retrieval
that aligns with the cybersecurity and data-handling constraints utilities work under, and the tokens
this avoids are a recurring per-revision cost, not a one-off. These are risk-reduction and
auditability properties, not compliance claims: the method does not satisfy NERC CIP or FERC Order
No.\ 881, and formal regulatory compliance is outside this study's scope; ``trustworthy'' here means
graph fidelity, deterministic retrieval, and keeping data inside the trust boundary, not a full safety
or adversarial-input assessment. The full discussion (NERC CIP, ENTSO-E, and FERC Order No.\ 881
context and the precise scoping of the cost claim) is in the Supplementary Material, Section~S8.

\section{Limitations and threats to validity}
\label{sec:limitations}

\textbf{Benchmark self-referentiality (primary threat to validity).} The questions are
template-generated from the deterministic CGMES-derived corpus, which keeps gold answers
machine-checkable and LLMs out of ground truth, but 83\% of the S1/S2 gold answers appear verbatim
in the corpus (Section~\ref{sec:results-corpus}) and the paraphrase perturbation retains
identifiers, so lexical matching still succeeds; the benchmark therefore measures budget-bounded
retrieval more than question-answering competence, bounding the ``matches LLM-extracted graphs''
claim to retrieval ordering rather than general QA accuracy. To separate faithful retrieval from
verbatim recall we add a derived-answer stratum whose gold is absent as corpus spans (Supplementary
Material, Section S4) and two non-templated human-phrased probes: 100 questions on SmallGrid (method
$0.83$, Section~\ref{sec:results-human}) and a held-out set on SmallGrid-3.0
(Section~\ref{sec:results-humanv3}), both with the ordering and anti-extraction advantages intact.
These are strong evidence rather than a full resolution: the questions span two networks, and a
broader multi-network, field-authored evaluation on noisy operator documentation (missing,
misspelled, or aliased identifiers) and multi-turn interaction remains future work. The guarantees
are conditional on a faithfully parsed CGMES model; robustness to degraded or conversational inputs
is not claimed.

\textbf{Statistical power.} The pre-registered confirmatory endpoint is the one-sided exact
McNemar test on the fresh 100-item bank spanning the two budget-binding encodings. The six
per-network contrasts discussed here are exploratory, Holm-adjusted descriptive analyses at
$n=50$ each. The MDE depends on the assumed discordant-pair proportion $d$: the pre-registration
recorded $0.198$, which assumes $d{=}0.25$, while the conservative half-discordant model
$d{=}0.5$ gives $0.280$ and the observed mean discordance $0.273$ gives $0.207$. We report the
conservative $0.280$ as a design-sensitivity benchmark. Three networks are significant after Holm correction: MicroGrid
($+0.24$, $p=7.3\times10^{-3}$), SmallGrid ($+0.46$, $p=1.2\times10^{-6}$), and SmallGrid-3.0
($+0.54$, $p=8.9\times10^{-8}$). Of these, SmallGrid and SmallGrid-3.0 clear the conservative MDE
and are the two budget-binding networks; MicroGrid is Holm-significant but its $+0.24$ falls below
the conservative $0.280$, so we do not claim it as an MDE-clearing effect. The other three
comparisons have deltas below the MDE and are underpowered, so we treat them as inconclusive rather than as wins
and do not read absence of a significant negative difference as evidence of equivalence. The
answer-preservation theorem (Theorem~\ref{thm:preservation}) supplies a retrieval-level no-harm
property under the stated precondition, a statement about answer presence; it does not establish
statistical equivalence or non-inferiority of end-to-end accuracy, and the three underpowered
comparisons do not establish either.

\textbf{Reader dependence.} The confirmatory statistics use one frozen primary reader
(Qwen2.5-7B-Instruct), so absolute accuracy levels are reader-specific (anchored 0.98, 0.78, 0.64 on
SmallGrid across three readers spanning two families, Supplementary Material, Section S4). Robust is
the ordering, not the level: anchored is above naive in every reader-by-network cell and at or above
the LLM-extracted arm on the held-out network under every reader, only the development-network
anchored-versus-extraction parity being reader-sensitive. The retrieval-level mechanism is
reader-independent by construction, since answer presence under the cap
(Theorem~\ref{thm:preservation}) is a property of the rendering alone, which the reader-free audit in
the artifact reproduces without any model. The main evaluation therefore rests on a single reader;
the two additional reader families (Supplementary Material, Section S4) serve as a robustness check on
the ordering, not as a second full evaluation, and running the confirmatory statistics on a second
reader family is a planned extension.

\textbf{Network diversity and behaviour at operational scale.} The six networks with a full
twelve-arm evaluation cover two CGMES versions but only three base topologies, so the empirical
dominance claim rests on one base topology family. A fourth, independently sourced ENTSO-E base
topology, RealGrid ($72{,}418$ objects, $14\times$ SmallGrid-3.0), does not lift that limitation.
Its multi-hop stratum was retired after a gold-uniqueness audit found none of its 25 gold labels
identifies a unique referent, with names shared by up to $5{,}488$ nodes; two subgraph-sample
comparisons that rested on that stratum were removed with it. What RealGrid supports is the
attribute-lookup stratum, where all three deterministic arms reach $1.000$, that is, no harm on an
independent topology rather than dominance (Section~\ref{sec:results-main}). The extraction-parity
claim consequently loses its RealGrid evidence and rests on the six-network comparison and the
reimplemented native-PPR arm instead. LightRAG and Microsoft GraphRAG remain untested on RealGrid,
disclosed future work. RealGrid does
not close the gap to fully operational transmission models ($10^5$ to $10^6$ objects); we do not
claim absolute accuracies transfer beyond it. What transfers at any scale is the seed-local
retrieval-level guarantee, network-size independent since $D(S)\le B$
(Theorem~\ref{thm:preservation}) bounds only the seed-local neighborhood, governed by seed degree rather
than $|V|$. A model-free audit~\citep{artifact2026_seedanchored} confirms up to $10^5$ objects
(Supplementary Material, Section S6) that anchored rendering preserves the answer edge and never hits
the cap while naive collapses once hub degree floods the budget, across a $32\times$ range of $B$.
Undemonstrated is \emph{reader} accuracy at full operational scale, requiring a licensed larger
network we do not have: the main open scaling question.

\textbf{Baseline scope and the LLM-extraction comparison.} The LLM-extraction comparison spans three
widely used builders, LightRAG, Microsoft GraphRAG (graphrag 3.1.0), and HippoRAG; other builders
stay unrun, and this is disclosed as such. We read each extracted graph under the shared render
pipeline with the same reader, seeder, and scorer to isolate graph source, because running each
framework's packaged native retriever would confound graph source with retrieval engine. One
native scheme was nonetheless tested: HippoRAG's personalized PageRank was reimplemented and
evaluated, and it stayed behind the seed-anchored arm
(Section~\ref{sec:results-gllm}). The packaged native retrievers themselves, including Microsoft
GraphRAG's community and local search, remain unrun. The cost argument (2.3x to 2.7x LightRAG's
tokens, all three against zero) holds regardless. Because $G_{\mathrm{LLM}}$ quality moves most with
the extractor (Section~\ref{sec:results-extractor}) and the comparison is confined to these three
frameworks, we rest the against-extraction claim on ordering rather than absolute margins.
Whether native-retriever configurations of each framework would narrow this gap under CGMES hub
topology is an open empirical question.

\textbf{Guarantee scope.} Theorem~\ref{thm:preservation} and Proposition~\ref{prop:noharm} bound
answer presence, not reader extraction, and cover seed-local answers under the checkable
precondition $D(S)\le B$. S4 aggregation and S5 superlative queries lie outside the guarantee; S4
is answered for fewer than half the items on the large networks (seed-anchored $0.40$ on
SmallGrid and $0.50$ on SmallGrid-3.0). On those two large budget-binding networks no evaluated
retrieval arm answers the S5 superlative items; this is not universal across the smaller networks,
where seed-anchored answers all three S5 items on MiniGrid-3.0. Reader-side arithmetic bounds
accuracy even when both operands are retrieved. The precondition is a genuine, checkable gate: the model-free
scaling study (Section S6) shows it flags naive's collapse at scale, though honestly it holds for
every question on every tested network, so it is never observed to reject a query here. Three
concrete cases fall outside the guarantee, and a deployer should be able to recognise them. First, a
high-degree seed whose incident render mass already exceeds the budget: the greedy prefix then
truncates inside the seed neighbourhood and the budget precondition $D(S)\le B$ fails. Second, an
aliased or missed seed, where a misspelled, aliased, or absent identifier prevents the seeding step
from locating the query object; here the precondition may well hold and what fails instead is
seeding correctness, about which the guarantee says nothing. Third, a nonlocal or global question,
such as an aggregation or superlative query, whose answer lies in no seed-local neighbourhood and so
falls outside the coverage of the guarantee rather than violating its precondition.

\textbf{Single context budget.} The primary QA runs use one budget, $B=8{,}000$ characters, fixed by
the prior study's frozen pipeline. The formal results hold for arbitrary $B$: the precondition
$D(S)\le B$ and no-harm dominance (Proposition~\ref{prop:noharm}) are budget-agnostic, and a
model-free budget sweep, reported in the released artifact and summarised in Supplementary
Material, Section S6, confirms this across a $32\times$ range
($B<m\ell$, Proposition~\ref{prop:quant}). We do not evaluate the reader accuracy-versus-budget curve
or adaptive and multi-turn budgeting, left to future work.

\textbf{Scoring correction.} The original deterministic scorer compared a predicted number with
the gold value without reading the gold \texttt{unit} field, so a numerically matching value
expressed in an incompatible unit could be accepted. We re-scored all 13{,}402 stored predictions
offline under corrected unit-aware numeric and count rules, with no new model inference. The
correction changes 154 verdicts (119 numeric and 35 count). All six-network anchored accuracies
are unchanged; three naive accuracies decrease by $0.02$ and MicroGrid decreases by $0.04$ (from
$0.72$ to $0.68$), producing the corrected values and Holm-adjusted tests reported in
Table~\ref{tab:main}.

\textbf{Entity-scoring provenance, and its resolution.} An audit identified an earlier
entity-scoring code path that did not implement the scorer described in the method: it accepted
any response containing the gold identifier, whereas the documented rule requires that identifier
to be the asserted answer. We therefore re-scored the affected stored predictions under the final
assertion-aware scorer. No new model inference was required.

Three values exist for the 3{,}503-item entity family and are distinguished here because they
have been conflated elsewhere. The permissive code path scores $0.726$; a strict reading that
requires exact delegation scores $0.673$; the frozen assertion-aware scorer, which requires the
gold to be asserted but tolerates the class prefix the banks store, scores $0.701$. Every entity
result in this paper and its supplement is the frozen value, $0.701$. The audit table in
Supplementary Material, Section~S7 reports that same $0.701$ over all 3{,}503 rows, alongside a
stored-verdict column computed over the 741 of those rows that carry a recorded verdict, where
the two rules disagree on 85. The correction does not change the exploratory original-bank
per-network contrasts or the pre-registered fresh-bank endpoint. The entity-heavy
results, including topology, shared-node intersection, RealGrid, the GraphRAG comparisons and
both human probes, are therefore recomputed under the final scorer rather than preliminary. Two
cross-family LLM judges agree with the frozen deterministic verdicts at 0.92 to 1.00 per stratum
across both judges, pooling to 0.987 and 0.983 respectively; that agreement does not resolve either
scoring-rule mismatch. The human spot-check protocol of the prior study was not repeated for the new items.

\section{Conclusion}
\label{sec:conclusion}

When an LLM assistant must answer questions over a standards-derived power-grid model, the graph
does not have to be rebuilt by an LLM. In the evaluated comparisons, part of the apparent advantage
of LLM-extracted graphs under naive rendering was a rendering-order artifact; under fixed anchored
rendering the parsed graph retained an advantage in settings where the evaluated extraction lost
relational or numeric fidelity. The lossiness of extraction we observe is
consistent with findings reported elsewhere; the novelty here is the render-boundary mechanism and
its formal precondition. Under a fixed context budget, dump-then-truncate
rendering evicts answer-bearing edges as soon as a high-degree hub saturates the window, and which
encoding falls into this trap changes unpredictably between representations. Seed-anchored retrieval
addresses this without tuning, a rendering order with no learned parameters that carries a checkable
no-harm guarantee, never worse than naive rendering in seed-local answer presence whenever the
seed-local render mass fits the budget, an elementary greedy-prefix property rather than a deep
theorem.

The load-bearing evidence is the truncation mechanism itself, measured without a reader. Split by
stratum, naive rendering retains the evidence for every single-hop item on both budget-binding
networks but for only $0.12$ of multi-hop items on SmallGrid and none on SmallGrid-3.0, while
seed-anchored rendering retains all of them on both strata; an item-level join against reader
correctness shows the multi-hop failure is truncation rather than reader error. That is the
guarantee doing the work it was stated to do. The pre-registered primary endpoint, on a fresh bank
within the SmallGrid topology family, confirms the same mechanism at item level: accuracy rises
from $0.450$ to $0.970$ at zero LLM graph-construction tokens. It is strong item-level evidence
inside one topology family and is not a network-level generalization claim; seed-anchored accuracy
is also significantly higher on MicroGrid, no significant negative difference was detected, and
the three remaining per-network comparisons were underpowered and remain inconclusive. The
human-authored probes and the independent $72{,}418$-object topology are reported separately as
robustness and transfer evidence, and neither is folded into the confirmatory endpoint.

The claim against LLM extraction is deliberately narrow and conditional on build cost: at zero
graph-construction tokens the standards-native arm is never behind the extracted graph
representations produced by any of the three frameworks (LightRAG, Microsoft GraphRAG, HippoRAG;
0.03M to 1.7M build tokens per network) when those representations are read under the common
retrieval-and-rendering pipeline; one framework's native retrieval scheme was reimplemented and
also stayed behind, while the packaged native retrievers were not run. Once the
extractor is a realistic open-weight model the standards-native arm leads by $+0.32$ on the
development network; any such
margin is conditional on the tested extractor, reader, budget, and framework set, but the ordering
under fixed cost is robust. In practice
the retriever is a drop-in rendering rule for an LLM assistant or audit tool over CGMES exports
at the scales evaluated (an independently sourced topology of $72{,}418$ objects, with the
model-free guarantee extending to $10^5$ objects), supplying context under a fixed budget with guaranteed presence of every budget-fitting
seed-local answer, an assurance that does not extend to grid control or the reader's downstream
reasoning.

Aggregation and global superlative queries remain unsolved and belong to query planning over the
parsed model rather than context retrieval: detect such a query and route it to a generated SPARQL
query over the CIM RDF (IEC 61970-501) or a network-analysis call, then let the reader compose the
returned value; that routing has not been evaluated as an end-to-end hybrid system. RealGrid
contributes an attribute-lookup no-harm result on an independent topology, where all three
deterministic arms reach $1.000$; its multi-hop stratum and the two subgraph-sample comparisons
that rested on it were retired for lack of unique gold referents, so no RealGrid result speaks to
topology retrieval or extraction parity in either direction (Section~\ref{sec:results-main}); remaining energy-domain extensions include the full
three-framework rebuild at RealGrid scale, FullGrid, and protection models, contingency
analysis, and
dynamic studies, where answering a query may require reasoning beyond the seed-local, $k$-local
neighbourhood the present guarantee covers. More broadly, where a governed engineering standard
exists, the evidence here suggests a data-driven assistant may do well to read it directly rather
than reconstruct it with an LLM.

\section*{Data availability}
The complete artifact (parsers, retrievers, QA generators with machine-checkable gold
specifications, the 100-question human-authored stress test, per-item audit trails, the
deviations register, and all result files with reproduction commands backing every number in this
paper) is archived as a citable public deposit and companion code repository; the DOI and
repository URL are withheld for double-anonymized review and will be provided on acceptance.
The CGMES test configurations originate from the ENTSO-E Conformity Assessment Scheme,
redistributed via the powsybl-core mirror.

\section*{Funding}
This research did not receive any specific grant from funding agencies in the public,
commercial, or not-for-profit sectors.

\section*{Competing interests}
The authors declare no financial or non-financial competing interest that could have
influenced the design, results, or conclusions of this study.

\section*{CRediT authorship contribution statement}
\textbf{Jayakumar Manoharan:} Conceptualization, Methodology, Software, Formal analysis,
Investigation, Data curation, Writing - original draft, Writing - review and editing,
Visualization. \textbf{Yamini Sehgal:} Software, Validation, Writing - review and editing.

\begingroup\footnotesize\linespread{0.9}\selectfont
\sloppy  
\bibliographystyle{elsarticle-num}
\bibliography{references}
\endgroup

\end{document}